\newtheorem{theorem}{Theorem}
\newtheorem{lemma}{Lemma}
\newtheorem{proposition}{Proposition}
\newtheorem{remark}{Remark}
\newtheorem{assumption}{Assumption}
\newtheorem{example}{Example}
\numberwithin{equation}{section}
\newcommand{\E}{\mathcal{E}}
\newcommand{\V}{\mathcal{V}}
\newcommand{\Lcal}{\mathcal{L}}
\newcommand{\calL}{\ensuremath{\mathcal{L}}} % calligraphic L
\begin{document}

\title[Multi-pulse edge-localized states on quantum graphs]{\bf Multi-pulse edge-localized states on quantum graphs}

\author{Adilbek Kairzhan}
\address{Department of Mathematics, University of Toronto, Toronto, Ontario, Canada}
\email{kairzhan@math.toronto.edu}

\author{Dmitry E. Pelinovsky}
\address{Department of Mathematics, McMaster University, Hamilton, Ontario  L8S 4K1, Canada}
\email{dmpeli@math.mcmaster.ca}

\maketitle

\begin{abstract}
	Edge-localized stationary states of the focusing nonlinear Schr\"{o}dinger equation on a general quantum graph are considered in the limit of large mass. Compared to the previous works, we include arbitrary multi-pulse 
	positive states which approach asymptotically to a composition of $N$ solitons, each sitting on a bounded (pendant, looping, or internal) edge. 
	Not only we prove that such states exist in the limit of large mass, but also we compute the precise 
	Morse index (the number of negative eigenvalues in the corresponding 
	linearized operator). In the case of the edge-localized $N$-soliton states on the pendant and looping edges, we prove that the Morse index is exactly $N$. The technical novelty of this work is achieved by avoiding
	elliptic functions (and related exponentially small scalings) 
	and closing the existence arguments in terms of the Dirichlet-to-Neumann maps for relevant parts of the given graph. 
\end{abstract}

\section{Introduction}

We address standing waves of the focusing 
NLS (nonlinear Schr\"{o}dinger) equation posed on a quantum graph
$\Gamma = \{ \E, \V \}$, where $\E$ is the set of edges 
and $\V$ is the set of vertices (see \cite{Noja,NPSymmetry} for review). The evolution system can be written in the normalized form:
\begin{equation}
\label{nls}
i \Psi_t + \Delta \Psi + 2 |\Psi|^{2} \Psi = 0,
\end{equation}
where the Laplacian $\Delta$ and the nonlinear term are defined 
componentwise on edges $\E$ 
subject to proper boundary conditions on the vertices $\V$ 
(see \cite{BK, Exner} for introduction to linear differential equations 
on quantum graphs). 

The quantum graph $\Gamma = \{ \E, \V \}$ is assumed to consist of a finite number $|\E|$ of bounded and unbounded edges. Enumerating every edge in $\Gamma$ uniquely gives the set 
$\E = \{e_1, e_2, \dots, e_{|\E|}\}$. The function $\Psi$ on $\Gamma$ can be represented as a vector with $|\E|$ components, 
\begin{equation}
\label{psi-vector}
\Psi = \{\psi_1, \psi_2, \dots, \psi_{|\E|}\},
\end{equation}
where $\psi_j$ is defined on the edge $e_j$ only. The function $\Psi$ can be defined in the Hilbert 
space of square-integrable functions $L^2(\Gamma) = \bigoplus_{e \in \E} L^2(e)$. 

Weak (resp. strong) solutions of the NLS time flow (\ref{nls}) 
are well defined in the $L^2$-based Sobolev spaces $H^1(\Gamma)$ (resp. $H^2(\Gamma)$), where $H^{1,2}(\Gamma) = \bigoplus_{e \in \E} H^{1,2}(e)$,
provided the boundary conditions on $\V$ are symmetric. Since $|\E| < \infty$, the NLS time flow (\ref{nls}) is essentially the evolution problem in one spatial dimension, which is globally well-posed both in $H^1(\Gamma)$ and $H^2(\Gamma)$ due to the cubic ($L^2$-subcritical) nonlinearity. 

We consider {\it the natural Neumann--Kirchhoff (NK) boundary conditions} 
at each vertex $v \in \V$ given by
\begin{equation}
\label{kbc}
\left\{ \begin{array}{l}
\Psi \text{ is continuous on } \Gamma, \\
\sum_{e \sim v} \partial \Psi(v) = 0
\text{ for every vertex } v \in \V,
\end{array} \right.
\end{equation}
where the derivatives $\partial$ are directed away from the vertex $v \in \V$ 
and $e \sim v$ denotes the edges $e \in \E$ adjacent to the vertex $v \in \V$.

Consistent with the boundary conditions (\ref{kbc}), 
weak solutions of the NLS equation (\ref{nls}) are defined 
in the energy space $H^1_C(\Gamma) := H^1(\Gamma) \cap C^0(\Gamma)$,
where $C^0(\Gamma)$ denotes the space of functions continuous on all edges in $\E$ and across all vertex points in $\V$. These weak solutions 
conserve the energy and mass functionals given respectively by 
\begin{equation}
\label{quantities}
E(\Psi) = \| \nabla \Psi \|^2_{L^2(\Gamma)} - \| \Psi \|^{4}_{L^{4}(\Gamma)}, \quad Q(\Psi) = \| \Psi \|^2_{L^2(\Gamma)}.
\end{equation}

Standing waves of the NLS equation (\ref{nls})
are given by the solutions of the form 
$\Psi(t,x) = \Phi(x) e^{-i \omega t}$,
where $\Phi \in H^1_C(\Gamma)$ is a weak solution of 
the stationary NLS equation
\begin{equation}
\label{nls-stat}
\omega \Phi =  - \Delta \Phi - 2 |\Phi|^{2} \Phi,
\end{equation}
for a given $\omega \in \mathbb{R}$. By bootstrapping arguments, 
every weak solution of the stationary NLS equation (\ref{nls-stat}) 
is also a strong solution satisfying the natural NK conditions (\ref{kbc}). Hence, we can use the vector representation $\Phi = (\phi_1, \phi_2, \dots, \phi_{|\E|})$ and rewrite the stationary NLS equation (\ref{nls-stat}) on every edge $e_j$ in $\E$ 
as a collection of differential equations: 
\begin{equation}
\label{nls-stat-j}
\omega \phi_j(x) =  - \phi_j''(x) - 2 |\phi_j(x)|^{2} \phi_j(x), \quad 
x \in e_j,
\end{equation}
which satisfy the boundary conditions (\ref{kbc}) on every vertex $v \in \V$.
We write $\Phi \in H^2_{\rm NK}(\Gamma)$ if $\Phi \in H^2(\Gamma)$
satisfies the NK conditions (\ref{kbc}). Only 
{\em real-valued} solutions of the stationary NLS equation are considered 
but we write the modulus sign for easy generalizations.

Among all possible real-valued solutions of the stationary NLS equation (\ref{nls-stat}), we are particularly interested
in the {\em positive edge-localized states} which satisfy the following conditions:
\begin{itemize}
\item $\Phi(x) > 0$ for every $x \in \Gamma$;
\item on each bounded edge $e_j \in \E$, there is at most one local 
critical point of $\phi_j$ (either maximum or minimum) inside the edge;
\item on each unbounded edge $e_j \in \E$, the function 
$\phi_j$ is monotonically decreasing and has exponential decay to $0$ at infinity.
\end{itemize}
Depending on the topological properties of the quantum graph $\Gamma$, 
the positive edge-localized states could become {\em the ground state}, 
the state of the least energy $E(\Psi)$ at fixed mass $Q(\Psi)$ \cite{AdamiCV,AdamiJFA,AST17}.

Although the set of conditions on $\Phi$ seems to be restrictive, the positive edge-localized states exist {\it in the limit of large mass} \cite{AST,BMP,D20,KS20}.
This limit for large $\mu = Q(\Phi)$ can be recast as the limit of large negative $\omega$ in the stationary NLS equation (\ref{nls-stat}) for the   cubic ($L^2$-subcritical) nonlinearity. Existence of such edge-localized 
states was confirmed analytically and/or numerically for the tadpole graph \cite{AdamiJFA,CFN,Dovetta,NP,NPS}, the dumbbell graph \cite{G,MP}, and the flower graph \cite{KMPX}. The importance of the positive edge-localized states is motivated by their (possible) orbital stability in the NLS time flow.

\begin{figure}[htbp] 
	\centering
	\includegraphics[width=1.75in, height = 1in]{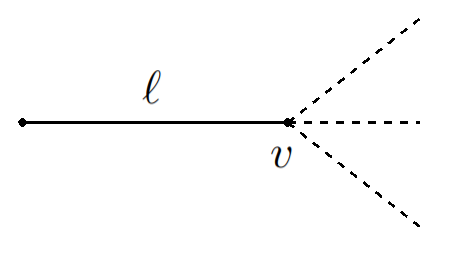} \hspace{0.5cm}
	\includegraphics[width=1.75in, height = 1in]{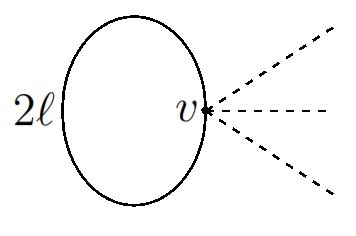}
	\hspace{0.5cm}
	\includegraphics[width=1.75in, height = 1in]{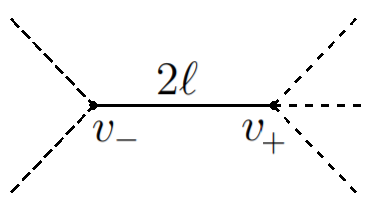}
	\caption{Schematic illustration of a pendant edge (left), a looping edge (middle), and an internal edge (right). }
	\label{fig-edges}
\end{figure}

Bounded edges of $\E$ are divided into pendant, looping, and internal edges 
according to the following classification (see illustrations on Fig. \ref{fig-edges}): 
\begin{itemize}
	\item {\em A pendant edge} of length $\ell$ is associated 
	with the segment $[0,\ell]$, where the left end is isolated from 
	the rest of $\Gamma$ subject to the Neumann boundary condition 
	and the right end is connected with the rest of $\Gamma$ at a vertex $v$.
	
	\item {\em A looping edge} of length $2 \ell$ is associated with the segment $[-\ell,\ell]$, 
	where both ends at connected to the rest of $\Gamma$ at a single vertex $v$, 
	hence contributing twice to the derivative condition in (\ref{kbc}).
	
	\item {\em An internal edge} of length $2 \ell$ is associated with the segment $[-\ell,\ell]$, 
	where different ends at connected to the rest of $\Gamma$ at two different vertices $v_-$ and $v_+$. 
\end{itemize}
Each unbounded edge is associated with the half-line $[0, \infty)$. 
The edge-localized states considered in \cite{BMP} consist of a single 
large-amplitude component $\phi_e$ on a fixed looping or internal 
edge $e$ of length $2 \ell$ such that $\phi_e$ has a single local maximum inside $e$, monotone from its maximum to the vertices of $e$, and 
concentrated on $e$ in the following sense
\begin{equation}
\label{eq:L2_proportion}
\frac{\left\|\Phi \right\|_{L^2(e)}}
{\left\|\Phi\right\|_{L^2(\Gamma)}}
\geq 1-Ce^{-2\epsilon\ell},
\end{equation}
where $\epsilon := \sqrt{|\omega|}$ is a large parameter 
and the constant $C$ is independent of $\epsilon$. Moreover, $\Phi$ has no internal maxima on the remainder of graph $\Gamma \backslash \{e\}$. 
The same result also holds for the pendant edge $e$ except that $\phi_e$ has a single local maximum at the terminal vertex. The edge-localized states considered in \cite{BMP} are illustrated on Fig. \ref{fig-states}.

\begin{figure}[htbp] 
	\centering
	\includegraphics[width=2in, height = 1.5in]{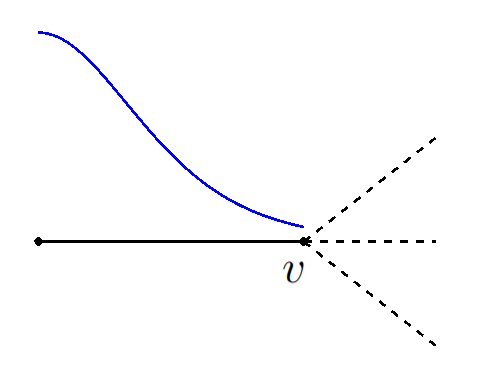} 
	\hspace{0.5cm}
	\includegraphics[width=2in, height = 1.5in]{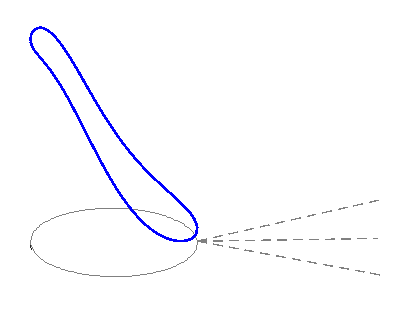}
	\hspace{0.5cm}
	\includegraphics[width=2in, height = 1.5in]{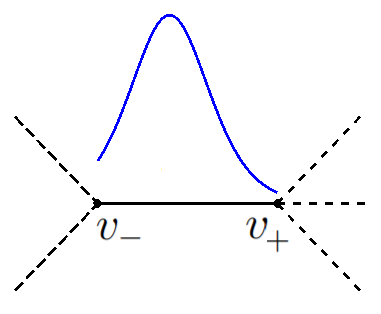}
	\caption{Schematic illustration of edge-localized states on a pendant (left), a looping edge (middle), and an internal edge (right). }
	\label{fig-states}
\end{figure}

The construction of the edge-localized states in \cite{BMP} relies 
on the properties of elliptic functions and on careful rescaling 
of exponentially small terms for the elliptic modulus. In addition, 
the states of lowest energy at fixed mass were analyzed 
in \cite{BMP} by comparing the exponentially small terms in the 
expansion of $\mu := Q(\Phi)$ in $\epsilon := \sqrt{|\omega|}$.

A similar result on the edge-localized states in the large-mass limit was obtained  independently in \cite{AST} using variational methods (for unbounded graphs only). Since the edge-localized states 
were identified in \cite{AST} as local energy minimizers in the restricted space of functions in $H^1_C(\Gamma)$ that attain their maximum on a given edge $e$, the Morse index for such states is exactly one, where {\em the Morse index} is the number of negative eigenvalues of the linearized operator $\mathcal{L} :  H^2_{\rm NK}(\Gamma) \subset L^2(\Gamma) \to L^2(\Gamma)$ given by 
\begin{equation}
\label{Lplus}
\Lcal = -\Delta - \omega - 6 |\Phi|^{2}.
\end{equation}

If the graph $\Gamma$ is unbounded, $\Phi$ is bounded and decays to $0$ at infinity exponentially fast, and $\omega < 0$, then the linearized operator  
$\mathcal{L}$ is self-adjoint and the absolutely continuous part of the spectrum is strictly 
positive and bounded away from zero by $|\omega|$ by Weyl's theorem on 
essential spectrum. Therefore, the Morse index is well-defined as the finite 
number of negative eigenvalues of $\mathcal{L}$ accounting for 
their finite multiplicity. We denote this number by $n(\mathcal{L})$. 
We also denote the multiplicity of the zero eigenvalue of $\mathcal{L}$ 
by $z(\mathcal{L})$. 

If the graph $\Gamma$ is bounded, then the spectrum of $\mathcal{L}$ 
is purely discrete and the numbers $n(\mathcal{L})$ and $z(\mathcal{L})$ 
are again well-defined since $\mathcal{L}$ is bounded from below.

Additional results related to the edge-localized states in the limit of large energy were obtained in \cite{D20}, where bounded graphs with the pendant edges were considered and convergence of the edge-localized states to the half-solitons was proven. Multi-pulse states were also studied in \cite{D20}, all pulses localize at the terminal vertices. Edge-localized states were considered in \cite{KS20} by recasting the existence problem to the semi-classical limit of an elliptic problem. It was proven in \cite{KS20} that the location of the edge-localized state with a single maximum as the ground state of energy at fixed mass is determined by the longest pendant edge of a bounded graph (or the longest internal edge if no pendant and looping edges are present). These results are included in those in \cite{BMP} but they are obtained in a more general setting in \cite{KS20}. 

{\em The main result of this work concerns the construction of the 
multi-pulse edge-localized states in the limit of 
large mass (large negative $\omega$).} As $\omega \to -\infty$, the multi-pulse states approach asymptotically a composition of $N$ solitons, 
each sitting on a bounded (pendant, looping, or internal) edge. 
The edges of the graph, where the multi-pulse state is localized, 
must satisfy the following non-degeneracy assumptions.

\begin{assumption}
	\label{assumption-non-degeneracy}
Let $\E_N := \{ e_1,e_2,\dots,e_N\}$ be the set of $N$ edges 
connected to $\Gamma \backslash \E_N$ at $|B|$ boundary vertices $\V_B = \{ v_1, v_2, \dots,v_{|B|}\}$. Let $\ell_{\rm min}$ be the length of the shortest edge in $\Gamma \backslash \E_N$ and $\ell_{j,{\rm min}}$ be the minimal half-length of the looping and internal edges or the minimal length of the pendant edges adjacent to the boundary vertex $v_j \in \V_B$ from $\E_N$, $1 \leq j \leq |B|$. 
The edge lengths must satisfy the constraints:	
	\begin{equation}
	\label{constraint-on-length-1}
	\ell_{\rm min} + \min_{1 \leq i \leq |B|} \ell_{i,{\rm min}} > \ell_{j,{\rm min}} 
	\quad 1 \leq j \leq |B|
	\end{equation}
	and
	\begin{equation}
	\label{constraint-on-length-2}
	3 \min_{1 \leq i \leq |B|} \ell_{i,{\rm min}} > \ell_{j,{\rm min}},
	\quad 1 \leq j \leq |B|.
	\end{equation}
\end{assumption}

\begin{assumption}
	\label{assumption-internal-edge}
Each internal edge has no common boundary vertices with other internal edges in $\E_N$ and its half-length  is strictly minimal 
among the half-lengths of the looping edges and the lengths of the pendant edges 
in $\E_N$ adjacent to its two boundary vertices.
\end{assumption}

\begin{remark}
	\label{remark-degree}
	In the definition of the edge lengths, we assume that each vertex has degree $3$ or higher, except for the terminal vertices of the pendant edges which have degree $1$. In other words, we do not allow fake vertices of degree $2$ in the graph $\Gamma$.
\end{remark}

The following theorem gives the existence result.

\begin{theorem}
	\label{theorem-main1}
	Let $\Gamma = \{ \E, \V \}$ be a graph with finitely many edges and
satisfying NK conditions at its vertices.  For any $N$ edges $\E_N := \{ e_1,e_2,\dots,e_N\}$ 
of finite lengths satisfying Assumptions \ref{assumption-non-degeneracy} and \ref{assumption-internal-edge} and for large enough $\epsilon := \sqrt{-\omega}$
there exists a positive edge-localized state $\Phi$ with the following properties:
\begin{enumerate}
	\item $\Phi |_{e_j}$ has a single local maximum and is monotone 
	from its maximum to the end vertices of $e_j \in \E_N$, for $1 \leq j \leq N$;
	\item $\Phi$ has no internal maxima on the remainder 
	of the graph $\Gamma \backslash \E_N$; 
	\item $\Phi$ is concentrated on $\E_N$ in the following sense 
	\begin{equation}
	\label{eq:L2}
	\frac{\left\|\Phi \right\|_{L^2(\Gamma \backslash \E_N)}}
	{\left\|\Phi\right\|_{L^2(\E_N)}}
	\leq  C e^{-\epsilon \ell_N},
	\end{equation}		
\end{enumerate}
where the constant $C$ is independent of $\epsilon$ 
and $\ell_N = \min\{\ell_{1,{\rm min}},\ell_{2,{\rm min}},\dots,\ell_{|B|,{\rm min}}\}$.
\end{theorem}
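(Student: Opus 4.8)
The plan is to reduce the stationary problem (\ref{nls-stat-j}) with the NK conditions (\ref{kbc}) to a finite-dimensional matching problem at the boundary vertices $\V_B$, which I then solve by a fixed-point argument valid for large $\epsilon$. Writing $\omega=-\epsilon^2$, each positive component solves $\phi''=\epsilon^2\phi-2\phi^3$, whose whole-line ground state is the soliton $\epsilon\,\sech(\epsilon x)$. I would split $\Gamma$ into the inner edges $\E_N$, each expected to carry one pulse, and the outer remainder $\Gamma\backslash\E_N$, on which $\Phi$ is forced to be exponentially small. The two regions interact only through the common vertex values $\mathbf{a}=(a_1,\dots,a_{|B|})$ and the outgoing derivatives at $\V_B$: continuity in (\ref{kbc}) identifies $\mathbf{a}$, and the derivative balance in (\ref{kbc}) becomes the equation to be closed.

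On each inner edge I would construct, for every small prescribed vertex value $a_j$, the unique positive solution of $\phi''=\epsilon^2\phi-2\phi^3$ satisfying the boundary datum intrinsic to the edge type --- the Neumann condition at the terminal vertex for a pendant edge, and the reflection symmetry that places the single maximum in the interior for a looping or internal edge. Rather than using Jacobi elliptic functions, I would obtain this profile and its dependence on $a_j$ as an exponentially small perturbation of the whole-line soliton, controlled by a contraction estimate for the truncated profile; this produces an inner Dirichlet-to-Neumann relation expressing the outgoing derivative as a smooth function of $a_j$, both quantities being exponentially small of order $e^{-\epsilon\ell_{j,{\rm min}}}$. On the outer remainder I would use that $\Phi=O(\max_j a_j)$ there, so the cubic term is negligible and the problem linearizes to $-\phi''+\epsilon^2\phi=0$ with NK conditions at the interior vertices of $\Gamma\backslash\E_N$ and Dirichlet data $\mathbf{a}$ at $\V_B$; the associated outer Dirichlet-to-Neumann map $\ML$ is then a well-controlled $|B|\times|B|$ matrix whose off-diagonal entries are governed by the shortest outer path of length $\ell_{\rm min}$.

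Imposing the derivative condition in (\ref{kbc}) at each $v_j\in\V_B$ yields a closed nonlinear system $F(\mathbf{a},\epsilon)=0$ balancing the inner Dirichlet-to-Neumann contributions against $\ML\,\mathbf{a}$. To leading order as $\epsilon\to\infty$ the system decouples, each component reducing to the scalar equation that fixes the individual soliton on $e_j$ with its effective Neumann-type condition; this is solved by the symmetric (or half-) soliton and is non-degenerate, since the linearized operator $\Lcal$ restricted to a single edge has trivial kernel once the symmetry or Neumann constraint eliminates the translation mode $\phi'$. I would then solve $F(\mathbf{a},\epsilon)=0$ by the implicit function theorem around this decoupled solution, obtaining $\mathbf{a}$ with exponentially small corrections and hence the full profile $\Phi$. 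Properties (1)--(2) follow from the monotone soliton structure of the inner profiles and the smallness of $\Phi$ on the outer part, while the concentration estimate (\ref{eq:L2}) follows from the $O(\epsilon\,e^{-\epsilon\ell_N})$ size of the boundary data together with the exponential decay along the outer edges.

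The main obstacle is the uniform-in-$\epsilon$ invertibility at the matching step, together with the careful ordering of the competing exponentially small scales produced by edges of different lengths. This is exactly where Assumptions \ref{assumption-non-degeneracy} and \ref{assumption-internal-edge} enter: inequalities (\ref{constraint-on-length-1})--(\ref{constraint-on-length-2}) ensure that both the coupling transmitted between pulses through $\Gamma\backslash\E_N$ (controlled by $\ell_{\rm min}+\min_i\ell_{i,{\rm min}}$) and the nonlinear self-interaction of a pulse at the boundary vertices of its own edge (controlled by $3\min_i\ell_{i,{\rm min}}$) remain strictly subdominant to the leading inner Dirichlet-to-Neumann term on each $e_j$. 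Consequently the Jacobian of $F$ is a small perturbation of an invertible, diagonally dominant matrix and the fixed point persists. Carrying out this bookkeeping of exponentially small quantities uniformly across all boundary vertices and for arbitrary admissible topologies is the delicate part of the argument.
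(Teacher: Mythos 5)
Your overall strategy is the same as the paper's: partition $\Gamma$ into $\E_N$ and $\Gamma\backslash\E_N$, express both sides through Dirichlet-to-Neumann relations at the boundary vertices $\V_B$, close the flux conditions of (\ref{kbc}) as a system for the vertex values, and solve it by the implicit function theorem for large $\epsilon$, with Assumption \ref{assumption-non-degeneracy} guaranteeing that the inter-vertex coupling (scale $e^{-\epsilon(\ell_{\rm min}+\min_i\ell_{i,\rm min})}$) and the cubic correction of the outer map (scale $e^{-3\epsilon\min_i\ell_{i,\rm min}}$) are subdominant to the leading term $e^{-\epsilon\ell_{j,\rm min}}$ at each vertex. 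This matches Section \ref{sec-proof} of the paper, where the analogue of your system $F(\mathbf{a},\epsilon)=0$ is solved one boundary vertex at a time precisely because the coupling sits in the error terms.

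There is, however, a genuine gap in your treatment of internal edges. You assert that on a looping \emph{or internal} edge the profile is fixed by ``the reflection symmetry that places the single maximum in the interior,'' and later that the linearization is non-degenerate because ``the symmetry or Neumann constraint eliminates the translation mode $\phi'$.'' For an internal edge this is false: its two endpoints are attached to two \emph{different} boundary vertices $v_j\neq v_k$ carrying generally different Dirichlet data, so there is no reflection symmetry, the maximum is shifted from the midpoint by an unknown amount $a_{e_0}$, and the translation mode is not killed by any boundary constraint. Your matching system is then underdetermined: the one-parameter family of symmetric profiles cannot satisfy two independent Dirichlet conditions at the two ends. The paper resolves this by introducing $a_{e_0}$ as an additional unknown and closing it with the requirement that the first integral $v^2-u^2+u^4=\beta$ of (\ref{energy-level}) computed from the data at $v_j$ equals that computed from the data at $v_k$, i.e.\ equation (\ref{equation-on-a}); this yields the $\epsilon$-independent leading-order shift (\ref{solution-a}). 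It is exactly here, and not in the bookkeeping of scales you describe, that Assumption \ref{assumption-internal-edge} is used: the internal edge must not share boundary vertices with other internal edges (so that (\ref{equation-on-a}) is a single scalar equation rather than a coupled nonlinear system) and its half-length must be strictly minimal among the adjacent pendant and looping edges (so that the error terms in (\ref{solution-a}) are small and the a priori bound on $a_{e_0}$ holds). Without this extra equation and its solvability analysis, your argument proves the theorem only when $\E_N$ contains no internal edges.
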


\begin{remark}
	\label{remark-th-1}
	For the pendant edge $e_j \in \E_N$, the maximum of $\Phi |_{e_j}$ occurs 
	at the left edge with the Neumann boundary conditions. 
	For the looping edge $e_j \in \E_N$, the maximum of $\Phi |_{e_j}$ occurs exactly at the middle point. For the internal edge $e_j \in \E_N$, the maximum of $\Phi |_{e_j}$ 
	is generally shifted from the middle point. The non-degeneracy condition 
	in Assumption \ref{assumption-internal-edge} allows us to control 
 	 the shift of the maximum of $\Phi |_{e_j}$ at the internal edge. 
 	 See Fig. \ref{fig-states} for illustration of edge-localized states on each bounded edge in $\E_N$.
\end{remark}

\begin{remark}
		\label{remark-th-2}
	Condition (\ref{eq:L2}) is equivalent to (\ref{eq:L2_proportion}) in the case $N = 1$ and $\E_N= \{ e \}$. Indeed, it follows from (\ref{eq:L2}) with $\ell_{N=1} = \ell$ that 
\begin{equation}
\label{L2norm-1}
	\| \Phi \|^2_{L^2(\Gamma)} = \| \Phi \|^2_{L^2(e)} + \| \Phi \|^2_{L^2(\Gamma \backslash \{e\})} \leq (1 + C^2 e^{-2\epsilon \ell}) \| \Phi \|^2_{L^2(e)},
\end{equation}
	which yields (\ref{eq:L2_proportion}). It follows from Theorems 3.1, 3.3, and 3.5 in \cite{BMP} that 
\begin{equation}
\label{L2norm-2}
	| \| \Phi \|^2_{L^2(e)} - \epsilon | \leq C \epsilon^2 e^{-2\epsilon \ell}
\end{equation}
for the pendant edge and 
\begin{equation}
\label{L2norm-3}
	| \| \Phi \|^2_{L^2(e)} - 2 \epsilon | \leq C \epsilon^2 e^{-2\epsilon \ell}
\end{equation}
for the looping and internal edges, where the constant $C$ is independent of $\epsilon$. In the case of $N$ edges,  we have	
\begin{equation}
\label{L2norm-4}
	\| \Phi \|^2_{L^2(\E_N)} = \sum_{j=1}^N \| \Phi \|^2_{L^2(e_j)}
\end{equation}
	with similar estimates for $\| \Phi \|^2_{L^2(e_j)}$, where $e_j \in \E_N$.
\end{remark}

\begin{remark}
		\label{remark-th-3}
Compared to the work in \cite{BMP}, we do not use 
elliptic functions and exponentially small scalings, 
which makes our results more general and the proofs simpler. 
We partition the graph $\Gamma$ into $\E_N$ and $\Gamma \backslash \E_N$ 
and reduce the existence problem to a system of equations for 
Dirichlet data on the boundary vertices in $\V_B$. 
Then, we show that all these 
equations can be solved independently of each other under 
the constraints (\ref{constraint-on-length-1}) and (\ref{constraint-on-length-2}) on the lengths of edges 
in Assumption \ref{assumption-non-degeneracy}. 
These constraints
provide compatability of asymptotic solutions for large $\epsilon$ 
and allow us to ignore the Dirichlet data at other boundary vertices.
\end{remark}

Our second main result is the precise characterization of 
the Morse index $n(\mathcal{L})$ and degeneracy index $z(\mathcal{L})$ 
of the multi-pulse edge-localized states in the limit of 
large mass (large negative $\omega$). 
This characterization was not provided in \cite{BMP} even in the case 
of $N = 1$. The following theorem gives the result 
when the set $\E_N$ does not include internal edges.

\begin{theorem}
	\label{theorem-main2}
	Let $\Phi$ be the positive $N$-pulse edge-localized state of Theorem \ref{theorem-main1} for large enough $\epsilon$ 
	and assume that the set $\E_N$ contains only pendant and looping edges. 
	Then, $n(\mathcal{L}) = N$ and $z(\mathcal{L}) = 0$, 
	where $\mathcal{L}$ is the linearized operator in (\ref{Lplus}).
\end{theorem}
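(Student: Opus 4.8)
The plan is to exploit the fact that for large $\epsilon$ the potential $-6|\Phi|^2$ in $\Lcal$ is a sum of $N$ well-separated wells, one centered on each edge $e_j \in \E_N$, and is exponentially small on $\Gamma \setminus \E_N$. After the rescaling $y = \epsilon x$, the local model on each edge is $\epsilon^2(-\partial_y^2 + 1 - 6\,\sech^2 y)$, whose full-line version has precisely one negative eigenvalue $-3\epsilon^2$ with even eigenfunction proportional to $\sech^2$, and a zero eigenvalue with odd eigenfunction $\phi_j'$ (the translational mode), the remaining spectrum lying above $\epsilon^2$. I would therefore organize the non-positive spectrum of $\Lcal$ into two groups: the $N$ \emph{even} modes clustered near $-3\epsilon^2$, and the \emph{odd} translational modes near $0$, with one such mode per looping edge; everything else will be bounded below by a positive constant.

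First I would establish $n(\Lcal) \ge N$ by the min-max principle. For each $e_j \in \E_N$ take a trial function $u_j$ equal to the (cut-off) even ground mode of the local operator on $e_j$, extended by zero to the rest of the graph. Since the wells sit on distinct edges, the $u_j$ have essentially disjoint supports and are nearly orthogonal, while each Rayleigh quotient $\langle \Lcal u_j, u_j\rangle/\|u_j\|_{L^2(\Gamma)}^2$ equals $-3\epsilon^2 + o(\epsilon^2) < 0$. Hence the span of $\{u_1,\dots,u_N\}$ is an $N$-dimensional subspace on which the quadratic form of $\Lcal$ is negative, so $\lambda_N(\Lcal) < 0$ and $n(\Lcal) \ge N$.

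Next I would pin down the non-positive spectrum exactly. On a pendant edge the maximum of $\Phi$ sits at the Neumann endpoint; the even mode restricts to an admissible Neumann eigenfunction and contributes the one negative eigenvalue, whereas the odd mode $\phi_j'$ vanishes at that endpoint and is therefore \emph{not} admissible, so the local operator has exactly one eigenvalue below $\epsilon^2$. On a looping edge the centered soliton admits both the even mode (again one negative eigenvalue) and the odd mode $\phi_j'$, whose eigenvalue is only exponentially close to $0$. On $\Gamma \setminus \E_N$ the potential is exponentially small, so the corresponding part of $\Lcal$ is bounded below by a positive multiple of $\epsilon^2$; likewise the second and higher modes on each edge lie above $\sim \epsilon^2$. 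All of these gapped eigenvalues retain their sign under the exponentially small vertex coupling by standard analytic perturbation estimates. Thus the only eigenvalues whose sign is not yet settled are the translational modes of the $N_\ell$ looping edges (where $N_\ell$ is the number of looping edges in $\E_N$), each lying within $O(e^{-c\epsilon\ell})$ of zero.

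The heart of the argument, and the step I expect to be the main obstacle, is to show that each looping-edge translational eigenvalue is \emph{strictly positive}, which simultaneously removes it from the Morse index and forces $z(\Lcal) = 0$. For a near-zero eigenpair $(\lambda, u)$ with $u \approx \phi_j'$ on the looping edge $e_j = [-\ell,\ell]$, I would use that $\phi_j'$ lies in the kernel of the edge operator $\Lcal|_{e_j}$ and integrate the Wronskian identity $\frac{d}{dx}\!\left( u\,\phi_j'' - u'\,\phi_j'\right) = \lambda\,\phi_j'\,u$ to obtain the solvability relation
\begin{equation*}
\lambda \, \|\phi_j'\|_{L^2(e_j)}^2 = \bigl[\, u\,\phi_j'' - u'\,\phi_j' \,\bigr]_{-\ell}^{\,\ell},
\end{equation*}
so that the sign of $\lambda$ is determined by the boundary Wronskian at the shared vertex. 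Using the tail asymptotics $\phi_j'(\pm\ell) = \mp\epsilon\,\phi_j(\ell) + \dots$ and $\phi_j''(\pm\ell) = \epsilon^2\phi_j(\ell) + \dots$, the continuity and Kirchhoff conditions at the vertex (noting that the odd mode contributes zero net current from the loop, so the coupling to $\Gamma\setminus\E_N$ enters only through the small even correction restoring continuity), and the positivity of the Dirichlet-to-Neumann map of $\Gamma \setminus \E_N$ (which acts as $+\epsilon + o(\epsilon)$ on the exponentially small boundary data), I would evaluate this boundary term and show it is positive, whence $\lambda > 0$. Combining the four steps yields exactly $N$ negative eigenvalues and no zero eigenvalue. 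I emphasize that this positivity is special to the symmetric (looping) and Neumann (pendant) geometries: for an internal edge the soliton center is only weakly pinned and the sign of the analogous Wronskian is not controlled by the present argument, which is precisely why Theorem \ref{theorem-main2} excludes internal edges.
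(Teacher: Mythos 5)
Your outline takes a genuinely different route from the paper: you localize the low-lying spectrum directly (min--max for the $N$ negative modes, then a Feynman--Hellmann/Wronskian sign computation for the near-zero translational modes on the loops), whereas the paper deforms the vertex conditions through a Robin family $\calL_{\balpha}$ from the Dirichlet-decoupled operator ($\balpha=\infty$), whose non-positive spectrum is counted edge-by-edge via Sturm theory, back to the NK operator ($\balpha=0$), and rules out any zero crossing along the way. Your lower bound $n(\calL)\ge N$ by trial functions is fine. But two steps in your plan are genuine gaps. First, the matching upper bound --- that \emph{all} remaining spectrum except the $N_\ell$ loop translational modes is uniformly positive --- is not delivered by ``standard analytic perturbation estimates for the exponentially small vertex coupling'': coupling or decoupling edges at a vertex is a change of operator domain, not a small potential perturbation, so you would need Dirichlet--Neumann bracketing or an IMS-type localization on the orthogonal complement of your $N+N_\ell$ approximate modes to close the count.

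Second, and more seriously, the step you yourself flag as the main obstacle is exactly where the argument is ambiguous as stated. For a near-zero eigenpair $(\lambda,u)$ on a loop $[-\epsilon\ell_j,\epsilon\ell_j]$, continuity at the vertex kills the odd component of $u$ at the endpoints only up to terms of order $\lambda e^{\epsilon\ell_j}$, because solutions of $(\calL-\lambda)u=0$ on an interval of length $2\epsilon\ell_j$ respond to an $O(\lambda)$ change in the equation with $O(\lambda e^{\epsilon\ell_j})$ changes in boundary data. Since $\phi_j'(\pm\epsilon\ell_j)=O(e^{-\epsilon\ell_j})$ and the eigenvalue you are trying to sign is $\lambda=O(e^{-2\epsilon\ell_j})$, every term in your boundary Wronskian $\bigl[u\,\phi_j''-u'\phi_j'\bigr]_{-\epsilon\ell_j}^{\epsilon\ell_j}$ --- the residual odd part, the even correction $s$, and the DtN contribution from $\Gamma_\epsilon\setminus\E_{N,\epsilon}$ --- enters at the \emph{same} exponential order, so the leading-order evaluation you propose does not by itself determine the sign. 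This is precisely the difficulty the paper's homotopy avoids: it only ever tests for a kernel, where $\lambda=0$ exactly, the loop solution is exactly $Au_j'+Bs_j$, continuity forces $A=0$ exactly, and the sign of the Robin datum reduces to $s_j'/s_j>0$ at the endpoint, which follows from the monotonicity of the period function ($\partial_pT_+<0$, $\partial_qT_+<0$); combined with the positivity of the Dirichlet-to-Neumann map of the remainder this yields $\alpha_j<0$, a contradiction for $\alpha_j\ge 0$. Your strategy could likely be completed by tracking all competing exponentially small terms (an Evans-function computation), but as written the decisive sign is asserted rather than proved.
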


\begin{remark}
To obtain the exact count of the Morse index and the multiplicity of the zero eigenvalue of $\calL$, we use a homotopy argument relating the non-positive spectra of the operators $\calL: H_{\rm NK}^2(\Gamma) \subset L^2(\Gamma) \to L^2(\Gamma)$ and $\calL_D: H_{\rm D}^2(\Gamma) \subset L^2(\Gamma) \to L^2(\Gamma)$ where $\calL_D$ has the same differential representation as $\calL$ and $H_{\rm D}^2(\Gamma)$ differs from $H_{\rm NK}^2(\Gamma)$ by the Dirichlet conditions at the boundary vertices in $\V_B$ instead of the NK conditions. The technique resembles the surgery principle widely used in the spectral analysis of differential operators on graphs \cite{BKKM,LS}.
\end{remark}

\begin{remark}
		\label{remark-th-4}
The implication of Theorem \ref{theorem-main2} to the time evolution 
of perturbations to the positive $N$-pulse edge-localized states 
of Theorem \ref{theorem-main1} is that these states are orbitally 
unstable under the NLS time flow (\ref{nls}) if $N \geq 2$. 
This follows from an easy application of the main results of \cite{Gr} and \cite{SS}. In agreement with the variational 
characterization of the single-pulse states on unbounded graphs $\Gamma$ in \cite{AST},  the positive edge-localized states with $N = 1$ 
are orbitally stable. This follows via the standard 
orbital stability theory \cite{GSS} from the monotonicity of the mapping 
\begin{equation}
\label{L2-mapping}
\epsilon \mapsto \left\|\Phi\right\|^2_{L^2(\E_N)}
\end{equation}
for large $\epsilon$ and the exponential smallness of $\left\|\Phi\right\|_{L^2(\E_N)}$ due to the decomposition (\ref{L2norm-4}) with bounds  (\ref{eq:L2}), (\ref{L2norm-2}) and (\ref{L2norm-3}). Note that it was proven in \cite{BMP} that the mapping (\ref{L2-mapping}) is $C^1$.
\end{remark}

\begin{remark}
If an internal edge is present in the set $\E_N$, then the explicit count of the Morse index in Theorem \ref{theorem-main2} is no longer applicable, as shown
in Example \ref{example-counter}. Therefore, characterization of the Morse index 
and the multiplicity of the zero eigenvalue of $\mathcal{L}$ is still an open problem in the case of multi-pulse states localized on internal edges.
\end{remark}

The paper is organized as follows. Section \ref{sec-2} reviews 
preliminary results from \cite{BMP} and \cite{KMPX}. 
Section \ref{sec-proof} and \ref{sec-proof-2} give the proof 
of Theorems \ref{theorem-main1} and \ref{theorem-main2} respectively. 
Section \ref{sec-4} contains examples of flower, dumbbell, and single-interval graphs where Theorems \ref{theorem-main1} and \ref{theorem-main2} are applicable and discuss counter-examples when assumptions of Theorem \ref{theorem-main2} are violated.

\section{Preliminary results}
\label{sec-2}

Since $\epsilon := \sqrt{-\omega}$ is considered to be large, 
we rescale the stationary NLS equation (\ref{nls-stat})
with the transformation $\Phi(x) = \epsilon U(\epsilon x)$, $x \in \Gamma$. 
As a result, the graph $\Gamma = \{ \E, \V \}$ is transformed to the 
$\epsilon$-scaled graph $\Gamma_{\epsilon} = \{ \E_{\epsilon}, \V \}$ 
for which every bounded edge $e \in \E$ of length $\ell_e$ 
transforms to the edge $e_{\epsilon} \in \E_{\epsilon}$ of length $\epsilon \ell_e$ but the unbounded edge $e \in \E$ remains the same as $e \in \E_{\epsilon}$. The stationary NLS equation (\ref{nls-stat}) 
is rewritten in the parameter-free form 
\begin{equation}
\label{statNLS-limit-graph}
(1 -\Delta) U - 2 |U|^2 U = 0.
\end{equation}
Writing $U = (u_1,u_2,\ldots,u_{|\E|})$ 
gives a collection of differential equations 
\begin{equation}
\label{nls-scaled}
-u_j''(z) + u_j(z) - 2 |u_j(z)|^{2} u_j(z) = 0, \quad 
z \in e_{j,\epsilon}, 
\end{equation}
subject to the boundary conditions (\ref{kbc}) at vertices $v \in \V$.

We select $N$ edges $\E_N := \{ e_1,e_2,\ldots, e_N\}$ with $N \leq |\E|$ in the original graph $\Gamma$ 
and partition the graph into two parts $\E_N$ and $\Gamma \backslash \E_N$. 
The two graphs intersect at the vertices in the set $\V_B := \{ v_1,v_2,\ldots, v_{|B|} \}$, which are referred to as {\em boundary vertices}. 
After rescaling of $\Gamma$ to $\Gamma_{\epsilon}$, 
we obtain $\E_{N,\epsilon}$ and $\Gamma_{\epsilon} \backslash \E_{N,\epsilon}$.

We introduce the Dirichlet data on the boundary vertices 
$\vec{p} = (p_1,p_2, \ldots, p_{|B|})$ and the Neumann data 
$\vec{q} = (q_1,q_2, \ldots, q_{|B|})$ for the graph $\Gamma_{\epsilon} \backslash \E_{N,\epsilon}$ with 
\begin{equation}
\label{Neumann_data}
p_j := u_{e \sim v_j}(v_j), \quad 
q_j := \sum_{e \sim v_j} \partial u_e(v_j), \quad v_j \in \V_B,
\end{equation}
where the derivatives $\partial$ are directed away from $\Gamma_{\epsilon} \backslash \E_{N,\epsilon}$ and $e \sim v_j$ lists all edges $e \in \Gamma_{\epsilon} \backslash \E_{N,\epsilon}$ incident to the vertex $v_j$. 
We are looking for solutions of the differential equations 
(\ref{nls-scaled}) such that 
\begin{equation}
\label{bound-down}
\sup_{z \in e_{j,\epsilon}} |u_j(z)| < \frac{1}{\sqrt{2}}, \quad 
e_{j,\epsilon} \in \Gamma_{\epsilon} \backslash \E_{N,\epsilon},
\end{equation}
and 
\begin{equation}
\label{bound-up}
\sup_{z \in e_{j,\epsilon}} |u_j(z)| > \frac{1}{\sqrt{2}}, \quad 
e_{j,\epsilon} \in \E_{N,\epsilon}
\end{equation}
where $\frac{1}{\sqrt{2}}$ is the 
constant solution of the differential equations in (\ref{nls-scaled}).

The following two lemmas were proven in \cite{BMP} (Theorem 2.9 and Lemma 2.12). 

\begin{lemma}
	\label{lemma-1}
There exist $C_0 > 0$, $p_0 > 0$, and $\epsilon_0>0$ such that for every
$\vec{p}$ with $\| \vec{p}\| < p_0$ and every
	$\epsilon > \epsilon_0$, there exists a solution $U \in H^2_{\rm NK}(\Gamma_{\epsilon} \backslash \E_{N,\epsilon})$
	to the stationary NLS equation (\ref{statNLS-limit-graph}) 
	on $\Gamma_{\epsilon} \backslash \E_{N,\epsilon}$ subject 
	to the Dirichlet data on $\V_B$ which
	is unique among functions satisfying \eqref{bound-down}.
	The solution satisfies the estimate
	\begin{equation}
	\label{eq:nlin_DTN_solution}
	\| U \|_{H^2(\Gamma_{\epsilon} \backslash \E_{N,\epsilon})} \leq C_0 \|\vec{p}\|,
	\end{equation}
	while its Neumann data satisfies
	\begin{equation}
	\label{eq:nlin_DTN_value}
	|q_j - D_j p_j| \leq C_0 \left( \|\vec{p} \| e^{-\epsilon \ell_{\rm min}} +
	\|\vec{p}\|^3 \right),	\qquad 1 \leq j \leq |B|,
	\end{equation}
	where $D_j$ is the degree of the $j$-th boundary vertex 
	in $\Gamma_{\epsilon} \backslash \E_{N,\epsilon}$ and $\ell_{\rm min}$
	is the length of the shortest edge in $\Gamma \backslash \E_N$.
	Furthermore, if $p_j \geq 0$ for every $j$, then $U(z) \geq 0$ for
	all $z \in \Gamma_{\epsilon} \backslash \E_{N,\epsilon}$ and 
	$U$ has no internal local maxima in $\Gamma_{\epsilon} \backslash \E_{N,\epsilon}$.
\end{lemma}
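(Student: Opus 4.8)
The plan is to treat the problem on $\Gamma_{\epsilon} \backslash \E_{N,\epsilon}$ as a small perturbation of a linear boundary value problem, since the a priori bound \eqref{bound-down} forces every component to stay in the regime where the cubic term $2|u|^2 u$ is subordinate. First I would introduce the linear solution operator: for Dirichlet data $\vec{p}$ on $\V_B$, let $W = W[\vec{p}]$ be the unique solution of $(1-\Delta)W = 0$ on $\Gamma_{\epsilon} \backslash \E_{N,\epsilon}$ subject to the NK conditions \eqref{kbc} at the interior vertices and to $W(v_j) = p_j$ at the boundary vertices in $\V_B$. The operator $(1-\Delta)$ with these mixed (Neumann--Kirchhoff at interior vertices, Dirichlet at $\V_B$) conditions is self-adjoint and strictly positive, hence boundedly invertible, and elliptic regularity gives $\|W\|_{H^2} \le C\|\vec{p}\|$ with a constant uniform in $\epsilon$. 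On each edge $W$ is an explicit combination of $e^{\pm z}$ (a pure decaying exponential on each half-line), which makes the linear Dirichlet-to-Neumann map computable.

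The technical core is to show that the linear DtN matrix $\Lambda_{\epsilon}$, defined by $q_j^{\rm lin} = \sum_k (\Lambda_{\epsilon})_{jk} p_k$, satisfies $\Lambda_{\epsilon} = \operatorname{diag}(D_1,\dots,D_{|B|}) + O(e^{-\epsilon \ell_{\rm min}})$. The leading diagonal term arises because each of the $D_j$ edges incident to $v_j$ carries, to leading order, a decaying profile $p_j e^{-z}$ whose derivative contributes $+p_j$ to $q_j$; summing over the $D_j$ incident edges yields $D_j p_j$. The corrections are exponentially small because (i) on a bounded edge the difference between the true profile and a pure exponential is governed by a reflection at the far end, costing a factor $e^{-\epsilon \ell}$, and (ii) any coupling between distinct boundary vertices is mediated by a path through $\Gamma_{\epsilon} \backslash \E_{N,\epsilon}$ that traverses at least one edge of length $\ge \epsilon \ell_{\rm min}$. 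Making this uniform over the (possibly complicated) topology of $\Gamma_{\epsilon} \backslash \E_{N,\epsilon}$, so that all off-diagonal entries and all diagonal corrections are bounded by a single $Ce^{-\epsilon \ell_{\rm min}}$, is the step I expect to be the main obstacle; it amounts to a careful analysis of the linear resolvent on the rescaled graph.

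With the linear map in hand, I would solve the nonlinear problem by a contraction argument. Writing $U = W[\vec{p}] + V$, the correction $V$ must satisfy $(1-\Delta)V = 2|U|^2 U$ with homogeneous Dirichlet data on $\V_B$, i.e. $V = (1-\Delta)_{D}^{-1}(2|U|^2 U)$, where $(1-\Delta)_{D}^{-1}$ is the bounded inverse with Dirichlet conditions at $\V_B$. The map $T(U) := W[\vec{p}] + (1-\Delta)_{D}^{-1}(2|U|^2 U)$ is a contraction on a small ball of $H^2(\Gamma_{\epsilon} \backslash \E_{N,\epsilon})$ once $\|\vec{p}\|$ is small enough, using the one-dimensional Sobolev estimate $\||U|^2 U\|_{L^2} \le C\|U\|_{H^1}^2\|U\|_{L^2}$ together with the analogous Lipschitz bound for the cubic nonlinearity; its unique fixed point is the desired solution and is the only one obeying \eqref{bound-down}. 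The inequality $\|U\|_{H^2} \le \|W\|_{H^2} + C\|U\|_{H^2}^3$ then yields \eqref{eq:nlin_DTN_solution}. For \eqref{eq:nlin_DTN_value} I would split the Neumann data of $U$ into the linear part, equal to $D_j p_j + O(\|\vec{p}\| e^{-\epsilon \ell_{\rm min}})$ by the DtN estimate above, and the Neumann data of $V$; since the right-hand side $2|U|^2 U$ has size $O(\|\vec{p}\|^3)$ in $L^2$, elliptic regularity gives $\|V\|_{H^2} = O(\|\vec{p}\|^3)$, and the trace inequality controlling the boundary derivative by the $H^2$ norm produces the remaining $O(\|\vec{p}\|^3)$ term.

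Finally, positivity and the absence of internal maxima would follow from the equation directly. Rewriting \eqref{nls-scaled} as $u_j'' = u_j(1 - 2u_j^2)$, at an interior local maximum of a positive component one has $u_j'' \le 0$, forcing $u_j \ge 1/\sqrt{2}$ and contradicting \eqref{bound-down}; an analogous argument at an interior negative minimum, combined with $U(v_j) = p_j \ge 0$ on $\V_B$ and the exponential decay on the half-lines, rules out negative values, so $U \ge 0$ with no internal local maxima. I would present these sign statements as a maximum-principle sweep carried out edge by edge.
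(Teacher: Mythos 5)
The paper does not actually prove this lemma: it is imported from \cite{BMP} (Theorem 2.9 there), and the text offers only the citation. Measured against the known strategy behind that result, your outline is essentially the right one: a linear mixed Dirichlet/NK boundary-value problem for $1-\Delta$ on $\Gamma_{\epsilon}\backslash\E_{N,\epsilon}$, a Dirichlet-to-Neumann matrix that is diagonal with entries $D_j$ up to $O(e^{-\epsilon\ell_{\rm min}})$ corrections, a contraction for the cubic correction $V=(1-\Delta)_D^{-1}(2|U|^2U)$ producing the $O(\|\vec p\|^3)$ term in \eqref{eq:nlin_DTN_value} via a trace estimate, and a maximum-principle sweep for positivity and the absence of interior maxima. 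The signs and the orders of all error terms come out correctly, and you have correctly identified the uniform control of the linear resolvent over the topology of $\Gamma_{\epsilon}\backslash\E_{N,\epsilon}$ as the main technical burden.

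There is, however, one genuine gap: the uniqueness claim. The lemma asserts uniqueness among \emph{all} functions satisfying \eqref{bound-down}, i.e.\ $\sup|u_j|<1/\sqrt{2}$ on every edge of $\Gamma_{\epsilon}\backslash\E_{N,\epsilon}$ --- a class of fixed size --- whereas your contraction yields uniqueness only in a ball of radius $O(\|\vec p\|)$, and you simply assert that the fixed point ``is the only one obeying \eqref{bound-down}.'' This does not follow: on the phase plane of $u''=u-2u^3$ there are arcs (e.g.\ portions of closed orbits with $\beta\in(-\tfrac14,0)$ that dip to a small minimum and return) which start from a boundary value of size $p_0$, remain below $1/\sqrt{2}$ over an arbitrarily long edge, and are not small in any norm controlled by $\|\vec p\|$. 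Ruling these out, or showing that the vertex conditions select a unique such arc on each edge, is precisely where \cite{BMP} does additional work beyond a fixed-point argument (a phase-plane classification of solutions confined to $\sup|u|<1/\sqrt{2}$ on long edges, propagated through the interior vertices). Without some argument of this type the uniqueness statement, as opposed to mere existence with the stated bounds, is not established.
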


\begin{lemma}
	\label{lemma-2}
	There exist $C_0 > 0$, $p_0 > 0$, and $\epsilon_0>0$ such that for every
	$p \in (0,p_0)$ and every $\epsilon > \epsilon_0$, there exists a real solution $u \in H^2(0,\epsilon \ell)$ to the differential equation $-u'' + u - 2 u^3 = 0$ satisfying $u'(0) = 0$ and $u(\epsilon \ell) = p$, which is unique among positive and decreasing functions satisfying \eqref{bound-up}.	The solution satisfies $u'(\epsilon \ell) < 0$ and
	\begin{equation}
\left| u'(\epsilon \ell) - u(\epsilon \ell) + 4 e^{-\epsilon \ell} \right|
\leq C_0 \epsilon e^{-3 \epsilon \ell}.
\label{eq:single-bump-DtN}
	\end{equation}
\end{lemma}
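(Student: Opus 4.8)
The plan is to reduce the boundary-value problem to a planar phase-plane analysis via the first integral, and then to invert a ``time map'' to obtain existence, uniqueness, and the sharp Dirichlet-to-Neumann relation. Set $L := \epsilon \ell$ and recall $u'(0) = 0$, $u(L) = p$, with $u > 0$ and $u' < 0$. Multiplying $-u'' + u - 2u^3 = 0$ by $u'$ and integrating gives $(u')^2 = u^2 - u^4 + c$; evaluating at $z = 0$ and denoting the (a priori unknown) maximum by $a := u(0)$, which satisfies $a > \tfrac{1}{\sqrt{2}}$ by \eqref{bound-up}, fixes $c = a^4 - a^2$ and yields the factorized form
\[
(u')^2 = (a^2 - u^2)(u^2 + a^2 - 1), \qquad z \in (0,L).
\]
Since $u$ is positive and decreasing, we take the branch $u' = -\sqrt{(a^2-u^2)(u^2+a^2-1)}$, which is real and negative as long as $u^2 + a^2 > 1$; this holds along the whole orbit precisely when $a$ is close to $1$ and $p$ lies above the turning value $\sqrt{1-a^2}$. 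Evaluating at $z = L$ gives the \emph{exact} identity $u'(L) = -\sqrt{(a^2 - p^2)(p^2 + a^2 - 1)}$, which is the quantity to be estimated; in particular the sign $u'(L) < 0$ is immediate from the branch choice.

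Next I would establish existence and uniqueness by separating variables, which converts the boundary conditions into the implicit time-map equation
\[
L = T(a,p) := \int_p^{a} \frac{du}{\sqrt{(a^2-u^2)(u^2+a^2-1)}},
\]
with the explicit reference value $T(1,p) = \ln\!\big((1+\sqrt{1-p^2})/p\big)$ coming from the degenerate case $a = 1$, where the orbit is exactly $u = \sech(z)$. A continuity-and-monotonicity argument for the shooting map $a \mapsto u(L;a)$ (equivalently, strict monotonicity of $T(\cdot,p)$ in the range of $a$ near $1$) then shows that for each large $\epsilon$ there is a unique admissible amplitude $a = a(p,L)$ solving $T(a,p) = L$, and that the corresponding $u$ is the unique positive, decreasing solution obeying \eqref{bound-up}.

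Finally I would extract the asymptotic relation. Expanding $T(a,p) = L$ about $a = 1$ determines the amplitude correction $\eta := a^2 - 1$ as a function of $(p,L)$ and forces $p$ to be exponentially small (comparable to the soliton tail $2e^{-L}$) with $|\eta| = O(e^{-2L})$, the precise $O(e^{-2L})$ coefficient being fixed by the expansion. Substituting $a^2 = 1 + \eta$ into the exact identity gives $u'(L)^2 = p^2 - p^4 + \eta + \eta^2$, and the time-map value of $\eta$ collapses this to $u'(L)^2 = (4e^{-L} - p)^2 + O(e^{-4L})$. Since both $u'(L)$ and $p - 4e^{-L}$ are negative, taking square roots yields $u'(L) = p - 4e^{-L} + O(e^{-3L})$, i.e.\ the claimed estimate; the coefficient $-4$ is exactly the one appearing in the exact-soliton relation $\sech'(L) - \sech(L) = -4e^{-L} + O(e^{-3L})$, and the remainder is absorbed into $C_0\epsilon e^{-3\epsilon\ell}$.

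The main obstacle is this last step: the asymptotic inversion of the time-map integral near the soliton, tracking the exponentially small corrections to $a$ (and the link between $\eta$, $p$, and $e^{-L}$) sharply enough to recover the coefficient $-4$ with an $O(e^{-3L})$ remainder. The difficulty is the integrable square-root singularity of the integrand at $u = a$ together with the logarithmic growth of $T$ as $p \to 0$; these are precisely the elliptic-integral asymptotics that must be handled carefully, or else circumvented by a variation-of-constants/fixed-point argument for $w := u - \sech$ with quantitative control of $w'(L)$.
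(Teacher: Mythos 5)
The paper does not actually prove Lemma \ref{lemma-2}: it is imported from \cite{BMP} (Lemma 2.12 there), where it is obtained from the explicit representation of $u$ by Jacobi elliptic functions and an expansion in the elliptic modulus. Your phase-plane/time-map argument is the same mathematics in different clothing --- the first integral $(u')^2=(a^2-u^2)(u^2+a^2-1)$ is the curve $E_\beta$ of \eqref{energy-level} with $\beta=a^4-a^2$, your $T(a,p)$ is the integral \eqref{period} reparametrized by the amplitude, and $\eta=a^2-1$ plays the role of the elliptic modulus --- so it is a legitimate self-contained substitute for the citation rather than a genuinely different route. The skeleton is sound and the algebra checks against the exact soliton ($a=1$, $u=\sech$, $p=2e^{-L}-2e^{-3L}+\dots$, $q=2e^{-L}-6e^{-3L}+\dots=4e^{-L}-p+O(e^{-3L})$). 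One caveat on uniqueness: you cannot borrow the monotonicity of the time map from Lemma \ref{lemma-3}, since the paper derives that \emph{from} \eqref{eq:single-bump-DtN}; an independent monotonicity proof (as in \cite{KMPX}) is required.

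There remain two genuine gaps. First, the decisive quantitative step is only asserted: inverting $T(a,p)=L$ to get $\eta=16e^{-2L}-8pe^{-L}+O(\epsilon e^{-4L})$ requires a uniform expansion of the elliptic integral near the homoclinic loop with exponentially small remainder, and this is exactly where all the work of \cite{BMP} lies; you have named the obstacle but not overcome it. Second, the final square-root extraction fails as written in part of the admissible range: from $q^2=(4e^{-L}-p)^2+O(\epsilon e^{-4L})$ one gets $|q-(4e^{-L}-p)|=|q^2-(4e^{-L}-p)^2|/(q+4e^{-L}-p)$, which is $O(\epsilon e^{-3L})$ only when $q+4e^{-L}-p\gtrsim e^{-L}$. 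As $p\uparrow 4e^{-L}$ (equivalently $a^2\downarrow 1-p^2$, where $u(L)$ approaches the turning point of the orbit and $q\downarrow 0$) the denominator degenerates and this argument yields only $O(e^{-2L})$; that boundary regime must be treated separately, e.g.\ by differentiating the time map instead of squaring. Relatedly, your own construction shows $\sup_a T(a,p)\approx\ln(4/p)$, so no solution with $q>0$ exists when $p>4e^{-L}+O(\epsilon e^{-3L})$: the existence claim ``for every $p\in(0,p_0)$'' carries this implicit restriction, and it should be stated as a solvability condition on the pair $(p,\epsilon)$ rather than described as the expansion ``forcing'' $p$ to be exponentially small --- $p$ is given, not produced.
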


\begin{remark}
	\label{remark-1}
	The	$C^1$ property of Neumann data $\vec{q}$ and $u'(\epsilon \ell)$ 
	with respect to parameter $\epsilon$ and Dirichlet data $\vec{p}$ and $u(\epsilon \ell) = p$ in Lemmas \ref{lemma-1} and \ref{lemma-2} respectively was established in \cite{BMP} with similar exponentially small estimates. We will not write this property expalicitly but will 
	use the $C^1$ property in the application of the implicit function theorem. 
\end{remark}
	
\begin{remark}
	The solution $u(z)$ in Lemma \ref{lemma-2} was represented by elliptic functions and its dependence on the elliptic modulus was also 
	studied in Lemma 2.12 of \cite{BMP}.
	This information is not used in the present work.
\end{remark}

For the solution $u \in H^2(0,\epsilon \ell)$ to the differential equation $-u'' + u - 2 u^3 = 0$ in Lemma \ref{lemma-2}, we write the boundary conditions as
\begin{equation}
\label{u-sol-bound-values}
p_+ := u(0), \quad 0 = u'(0), \quad 
p := u(\epsilon \ell), \quad q := -u'(\epsilon \ell),
\end{equation}
where $p_+ \in (p,1)$, $p > 0$, and $q > 0$.
Denoting $v(z) := u'(z)$, the pair $(u(z), v(z))$ for all $z\in (0, \epsilon \ell)$ stays on the invariant curve 
\begin{equation}
\label{energy-level}
E_{\beta} := \{(u,v): \;\; v^2 - u^2 + u^4 = \beta\}
\end{equation}
with some constant $\beta > -\frac{1}{4}$. In particular, in the view of (\ref{u-sol-bound-values}), the constant $\beta$ satisfies $\beta = q^2 - p^2 + p^4 = -p_+^2 + p_+^4$. Fig. \ref{fig-period} shows the $(u,v)$ phase plane 
and the invariant curve between the points $(p_+,0)$ and $(p,-q)$. The period function $T_+(p,q)$ defined by 
\begin{equation}
\label{period}
T_+(p,q) := \int_{p}^{p_+} \frac{du}{\sqrt{\beta + u^2 - u^4}}
\end{equation}
gives the $z$-length of the solution $u(z)$ obtained along the invariant curve $E_\beta$ between points $(p_+,0)$ and $(p,-q)$. For the solution $u \in H^2(0,\epsilon \ell)$ in Lemma \ref{lemma-2}, the relations (\ref{u-sol-bound-values}) can be written as 
\begin{equation*}
T_+(p, q) = \epsilon \ell, \quad
u(T_+(p, q)) = p, \quad u'(T_+(p, q)) = -q.
\end{equation*}
Compared to the standard terminology, see, e.g., \cite{Vill1}, where the period function is introduced for the fundamental period of the periodic function $u(z)$ along the integral curve $E_{\beta}$, the period function $T_+(p,q)$ given by (\ref{period}) corresponds to a part of the integral curve $E_{\beta}$.

\begin{figure}[htbp] 
	\centering
	\includegraphics[width=6in, height = 4in]{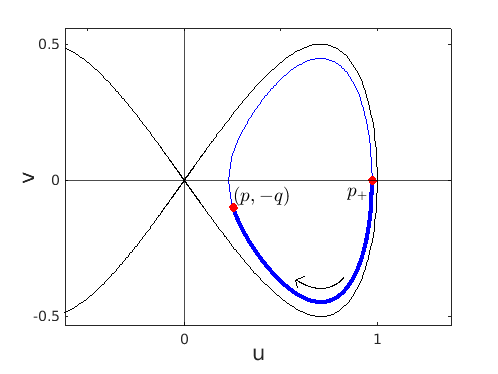}
	\caption{Phase plane $(u,v)$ for the differential equation $-u'' + u - 2 u^3 = 0$ showing the homoclinic loop for $\beta = 0$, the integral curve $E_{\beta}$ for $\beta \in (-\frac{1}{4},0)$, and the part of the integral curve between points $(p_+,0)$ and $(p,-q)$.}
	\label{fig-period}
\end{figure}

The following two lemmas partially proven in \cite{KMPX} 
describe properties of the period function $T_+(p,q)$ and the solution $u(z)$ in Lemma \ref{lemma-2}.

\begin{lemma}
\label{lemma-3}
There is small $\delta > 0$ such that 
\begin{equation}
\label{period-property}
\partial_p T_+(p,q) < 0\quad \mbox{\rm and} \quad \partial_q T_+(p,q) < 0
\end{equation}
for every $p \in (0,\delta)$ and $q \in (0,\delta)$.
\end{lemma}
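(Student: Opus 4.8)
The plan is to remove the square-root singularity of the integrand in (\ref{period}) at the upper turning point $p_+$ and only then differentiate under the integral sign. Writing $\beta+u^2-u^4=(p_+^2-u^2)(u^2-p_-^2)$ with $p_\pm^2=\tfrac12\bigl(1\pm\sqrt{1+4\beta}\bigr)$ the two roots of $x^2-x-\beta=0$ (so $p_+^2+p_-^2=1$ and $p_+^2p_-^2=-\beta$), I substitute $u=p_+\cos\theta$. The factor $p_+^2-u^2=p_+^2\sin^2\theta$ cancels the vanishing of the denominator, giving the regular representation
\begin{equation*}
T_+(p,q)=\int_0^{\theta_0}\frac{d\theta}{\sqrt{g(\theta)}},\qquad g(\theta):=p_+^2\cos^2\theta-p_-^2,\qquad \theta_0:=\arccos(p/p_+),
\end{equation*}
in which $(p,q)$ enters only through $\theta_0$ and through $p_\pm^2$ via $\beta=q^2-p^2+p^4$. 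Since $g(\theta_0)=p^2-p_-^2=q^2/(p_+^2-p^2)>0$, the integrand is bounded on $[0,\theta_0]$ and differentiation is now legitimate.

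For $\partial_qT_+$, because $\beta$ is the only channel through which $q$ enters and $\partial_q\beta=2q>0$, it suffices to prove $\partial_\beta T_+<0$. Using $\partial_\beta(p_+^2)=-\partial_\beta(p_-^2)=(p_+^2-p_-^2)^{-1}$ and the identity $1+\cos^2\theta=(1+g)/p_+^2$ to split the bulk integrand, the $\beta$-derivative reduces to
\begin{equation*}
\partial_\beta T_+=\frac{1}{2p_+^2(p_+^2-p_-^2)}\left[\frac{p}{q}-T_+-\int_0^{\theta_0}\frac{d\theta}{g(\theta)^{3/2}}\right].
\end{equation*}
The point is to bound the singular integral $J:=\int_0^{\theta_0}g^{-3/2}\,d\theta$ from below. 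Here I use the exact primitive $\frac{d}{d\theta}\bigl(\sin\theta\,g^{-1/2}\bigr)=(p_+^2-p_-^2)\cos\theta\,g^{-3/2}$ together with $0\le\cos\theta\le1$ to get $J\ge\int_0^{\theta_0}\cos\theta\,g^{-3/2}\,d\theta=\frac{p_+^2-p^2}{p_+q(p_+^2-p_-^2)}$. For $p,q$ small one has $p_+\to1$ and $p_+^2-p_-^2\to1$, so $\frac{p_+^2-p^2}{p_+(p_+^2-p_-^2)}>p$ and this lower bound already exceeds $p/q$; hence the bracket is negative and $\partial_\beta T_+<0$, which gives $\partial_q T_+<0$.

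For $\partial_pT_+$ the same substitution, after collecting the boundary and bulk contributions, yields
\begin{equation*}
\partial_p T_+=-\frac{1}{q}+\frac{p(1-2p^2)}{p_+^2(p_+^2-p_-^2)}\left[J+T_+-\frac{p}{q}\right],
\end{equation*}
so $\partial_pT_+<0$ is equivalent to $p(1-2p^2)\bigl(qJ+qT_+-p\bigr)<p_+^2(p_+^2-p_-^2)$. \emph{I expect this to be the main obstacle.} Unlike the $q$-case, the leading contributions cancel: in the representative limit $q\to0$ at fixed small $p$ one has $p_-\to p$, $J\sim 1/(pq)$, and both sides of the inequality tend to the common value $(1-p^2)(1-2p^2)$, so the sign is decided only at subleading order. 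To close it I would carry the next-order terms of the small-$(p,q)$ expansion of $J$ and $T_+$, in the regime where the orbit is $C^1$-close to a translate of the soliton $\sech$ and $T_+$ is logarithmically large; this is exactly the period-function monotonicity analyzed in \cite{KMPX}, and the smallness of $\delta$ in the statement is used precisely to guarantee that the surviving subleading term has the correct sign.

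An equivalent conceptual route, which I would use to cross-check the signs and which is consistent with the elliptic-function-free philosophy of the paper, is the phase-plane formulation: $T_+$ is the time for the solution of $-w''+w-2w^3=0$ with $w(0)=p$, $w'(0)=q$ to reach its maximum at $s=T_+$. Since the equation is autonomous, $w'$ solves the linearized equation $\zeta''=(1-6w^2)\zeta$; by the implicit function theorem applied to $w'(T_+)=0$ (with $w''(T_+)=p_+(1-2p_+^2)<0$ because $p_+>1/\sqrt2$) and constancy of the Wronskians $W[w',\partial_pw]$, $W[w',\partial_qw]$, $W[\partial_pw,\partial_qw]$, the signs of $\partial_pT_+$ and $\partial_qT_+$ reduce to the signs of the variational (Jacobi) fields at the endpoint $s=T_+$. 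This reproduces the clean sign for $\partial_qT_+$ and the same delicate near-cancellation for $\partial_pT_+$, confirming that the subleading small-$(p,q)$ estimate is the essential step.
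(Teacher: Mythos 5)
Your route is genuinely different from the paper's. The paper does not touch the period integral directly: it converts the Dirichlet-to-Neumann bound \eqref{eq:single-bump-DtN} of Lemma \ref{lemma-2} into the asymptotic expansion \eqref{period-asymptotics}, $T_+(p,q)=-\ln\frac{p+q}{4}+\mathcal{O}\bigl((p^2+q^2)(|\ln p|+|\ln q|)\bigr)$, and obtains both signs in \eqref{period-property} at once from the leading term $-\frac{1}{p+q}$, relying on the $C^1$ property of the error terms established in \cite{BMP} (Remark \ref{remark-1}). Your direct computation is correct as far as it goes: the substitution $u=p_+\cos\theta$, the resulting formula for $\partial_\beta T_+$, the exact primitive $\frac{d}{d\theta}\bigl(\sin\theta\, g^{-1/2}\bigr)=(p_+^2-p_-^2)\cos\theta\, g^{-3/2}$, and the lower bound $J\geq \frac{p_+^2-p^2}{p_+ q\,(p_+^2-p_-^2)}>\frac{p}{q}$ all check out, so the proof of $\partial_q T_+<0$ is complete and self-contained --- arguably more robust than the paper's, since it does not pass through the error terms of \eqref{eq:single-bump-DtN}.

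However, the $\partial_p T_+<0$ half is not proved. You correctly identify that in your formula the $\mathcal{O}(1/q)$ contributions from $-1/q$ and from $J$ cancel exactly as $q\to 0$ at fixed $p$ (indeed $qJ\to 1/p$, $qT_+\to 0$, and $q\,\partial_p T_+\to 0$), so the sign is decided by the $\mathcal{O}(1)$ remainder --- but that remainder is never computed. The statements ``I would carry the next-order terms'' and the appeal to the period-function analysis of \cite{KMPX} are a plan, not an argument; the cross-check via Jacobi fields in your last paragraph reproduces the same cancellation without resolving it. Since the lemma is claimed for all $p,q\in(0,\delta)$ independently, the regime $q\ll p$ is exactly where your estimate degenerates, and nothing in the proposal pins down the sign there. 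To close it you would need either the next term in the expansion of $J$ beyond $1/(pq)$ (equivalently, monotonicity of the half-period in $\beta$ plus a uniform continuity argument in $q$), or the paper's shortcut of differentiating \eqref{period-asymptotics} with the quantitative $C^1$ control of the error from \cite{BMP}. As written, the first inequality in \eqref{period-property} remains unestablished.
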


\begin{proof}
It follows from the bound (\ref{eq:single-bump-DtN}) that 
\begin{equation}
\label{period-asymptotics}
T_+(p,q) = -\ln\left(\frac{p+q}{4}\right) + \mathcal{O}\left( 
(p^2+q^2)(|\ln p| + |\ln q|)\right) \quad \mbox{\rm as} \quad p,q \to 0.
\end{equation}
Taking derivatives of (\ref{period-asymptotics}) in $p$ and $q$ 
justifies the inequalities in (\ref{period-property}). 
\end{proof}

\begin{remark}
	The second property in (\ref{period-property}) is also proven 
	in Lemma 3.6 of \cite{KMPX} by using more complicated analysis 
	of the integrals in (\ref{period}).
\end{remark}

\begin{lemma}
	\label{bound-up-s-vanishing-points}
	Let $u$ be a real, positive, and decreasing solution to the nonlinear equation $-u'' + u - 2 u^3 = 0$ on $(0,T_+(p,q))$ satisfying 	
\begin{equation}
\label{bound-up-u-period-relation}
	u(0) = p_+, \quad u'(0) = 0, \quad u(T_+(p,q)) = p, \quad u'(T_+(p,q)) = -q
\end{equation} 
with some $p > 0$ and $q > 0$, where $T_+(p,q)$ and $p_+$ are defined in (\ref{period}). Then, 
	the general solution of the linearized equation $- w''(z) + w(z) - 6 u^2(z) w(z) = 0$ is given by 
\begin{equation}
\label{superposition}
	w(z) = A u'(z) + B s(z),
\end{equation}
	where $A$ and $B$ are arbitrary parameters and $s(z)$ satisfies 
	$s(0) \neq 0$, $s'(0) = 0$. Moreover, for sufficiently small $p$ and $q$, $u'$ is positive on $(0,T_+(p,q))$, $s$ vanishes at exactly one point on the interval $(0,T_+(p,q))$ and both functions satisfy
	\begin{equation}
	\label{bound-up-weyl-tit-s}
	\frac{u''(T_+(p, q))}{u'(T_+(p, q))} > 0, \quad 
	\frac{s'(T_+(p,q))}{s(T_+(p,q))} > 0.
	\end{equation}
\end{lemma}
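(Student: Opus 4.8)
The plan is to run a Sturm-type node-counting argument for the linearized equation $-w''+w-6u^2w=0$ on the finite interval $(0,T_+(p,q))$, anchored by the soliton limit $p,q\to 0$. First I would obtain the superposition (\ref{superposition}): differentiating $-u''+u-2u^3=0$ in $z$ gives $-(u')''+u'-6u^2u'=0$, so $w=u'$ is the translational (variational) solution, and a linearly independent partner $s$ is fixed by the even data $s(0)=1$, $s'(0)=0$. Their Wronskian $W=u's'-u''s$ equals $-u''(0)s(0)=-u''(0)$ at $z=0$; since (\ref{bound-up}) forces the peak value $p_+>\tfrac{1}{\sqrt2}$, we have $u''(0)=p_+(1-2p_+^2)\neq 0$, so $W\neq 0$ and (\ref{superposition}) holds. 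That $u'$ is nodeless on $(0,T_+)$ is then read directly off the phase portrait (\ref{energy-level}): the orbit leaves the turning point $(p_+,0)$ into $v<0$ and reaches $(p,-q)$ with $q>0$ before $v$ returns to zero, so $v=u'$ keeps a strict sign on the open interval, its only zero on $[0,T_+]$ being at $z=0$.

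The heart of the argument is the single node of $s$. As $p,q\to 0$ we have $\beta=q^2-p^2+p^4\to 0^-$ and $T_+(p,q)\to\infty$ (Lemma \ref{lemma-3}), while the initial data $(p_+,0)\to(1,0)$, so by continuous dependence on the level $E_\beta$ the solution $u$ converges in $C^1_{\mathrm{loc}}[0,\infty)$ to the half-soliton $\sech z$. Hence $s$ converges, uniformly on compacts, to the even solution $s_0$ of $(-\partial_z^2+1-6\sech^2 z)s_0=0$ with $s_0(0)=1$, $s_0'(0)=0$, which is an associated Legendre function $Q_2^1$ in the variable $\tanh z$ and may be written explicitly as
\[
s_0(z)=\tfrac12\big(-\cosh z+3\,\sech z-3z\,\sech z\tanh z\big).
\]
A short check shows $s_0(0)=1$, that $s_0$ has exactly one simple zero $z^\ast\approx 0.73$, and that for $z>z^\ast$ it is negative and diverges along the growing mode $-\tfrac14 e^{z}$. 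For small $p,q$ this single crossing persists near $z^\ast$ and lies well inside the long interval $(0,T_+)$.

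The remaining, and most delicate, point is to rule out spurious nodes of $s$ in the tail. Since $u\to 0$ on the tail, there is a fixed $z_0\approx 1.53$, independent of small $p,q$, beyond which $u(z)<\tfrac{1}{\sqrt6}$ and hence $1-6u^2>0$; on $(z_0,T_+)$ the equation $w''=(1-6u^2)w$ is disconjugate, so $s$ has at most one zero there. Because $[0,z_0]$ is a fixed compact set, the convergence $s\to s_0$ shows that $s$ has already made its one crossing inside $(0,z_0)$ and enters the tail negative and decreasing, aligned with the growing mode; disconjugacy then forbids any further sign change on $(z_0,T_+)$, pinning the count at exactly one. The main obstacle is precisely this tail control: the length $T_+$ tends to infinity as $p,q\to 0$, so compact-set convergence to $s_0$ must be combined with the positivity of $1-6u^2$ far out to guarantee that no node is generated over the growing tail.

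Finally, the endpoint ratios (\ref{bound-up-weyl-tit-s}) follow from explicit boundary data. From the equation, $u''(T_+)=u(T_+)-2u(T_+)^3=p-2p^3$, while (\ref{bound-up-u-period-relation}) gives $u'(T_+)=-q$, so the logarithmic derivative $u''(T_+)/u'(T_+)$ is determined by these two quantities of controlled sign for small $p$. For $s$, the tail analysis forces $s(T_+)$ and $s'(T_+)$ to lie on the growing exponential mode and therefore to share a sign, so that $s'(T_+)/s(T_+)$ has a definite sign. Inserting these values, with the outward-derivative convention at the boundary vertex applied consistently, yields the inequalities in (\ref{bound-up-weyl-tit-s}).
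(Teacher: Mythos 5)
Your argument is correct, but the central step --- that $s$ has exactly one zero on $(0,T_+(p,q))$ --- is reached by a genuinely different route than the paper's. The paper takes $s:=\partial_q u$ and $r:=\partial_p u$, shows they are proportional, and reads off the exact boundary identities $s(T_+)=q\,\partial_q T_+$, $s'(T_+)=q\,\partial_p T_+$, so that the second ratio in (\ref{bound-up-weyl-tit-s}) equals $\partial_p T_+/\partial_q T_+>0$ directly from Lemma \ref{lemma-3}; the node count then comes from identifying zeros of $s$ with critical points $\partial_Q T_+(P,Q)=0$ along the level curve (\ref{energy-level}) and invoking Lemmas 3.6--3.10 of \cite{KMPX}. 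You instead fix $s$ by a Wronskian normalization (using $p_+\neq \tfrac{1}{\sqrt{2}}$ so $u''(0)\neq 0$), pass to the soliton limit $u\to\sech$, identify the even solution with $\tfrac12 Q_2^1(\tanh z)$ and verify its single simple zero explicitly, and then control the arbitrarily long tail by convexity/disconjugacy of $w''=(1-6u^2)w$ where $u<\tfrac{1}{\sqrt{6}}$, noting that $s$ enters that region negative and decreasing. Both arguments are asymptotic in $p,q$; yours is self-contained (no appeal to the period-function lemmas of \cite{KMPX}) at the price of the explicit special-function computation and the compact-set-plus-tail splitting, while the paper's yields the endpoint ratio for $s$ as an exact identity in $\partial_p T_+,\partial_q T_+$ rather than from tail sign information, and ties the zero of $s$ to the same period-function monotonicity that drives Lemma \ref{lemma-3}. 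One shared caveat: since $u'(T_+)=-q$, the literal quotient $u''(T_+)/u'(T_+)$ equals $p(1-2p^2)/(-q)<0$; the stated positivity (both in your last step and in the paper's display (\ref{bound-down-odd-weyl-tit})) holds for the outward derivative $-u'(T_+)=q$, which is the quantity actually used in the boundary conditions (\ref{kbc-alpha}), so that convention should be made explicit.
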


\begin{proof}
	It follows from properties of $u(z)$ in Lemma \ref{lemma-2} that $u'(z) > 0$ for $z \in (0,T_+(p,q))$. The first inequality in (\ref{bound-up-weyl-tit-s}) follows from 
\begin{equation}
\label{bound-down-odd-weyl-tit}
\frac{u''(T_+(p, q))}{u'(T_+(p, q))} = \frac{p(1-2p^2)}{q} > 0,
\end{equation} 
for $p \in (0,\frac{1}{\sqrt{2}})$ and $q > 0$. 

In order to study the decomposition (\ref{superposition}), we introduce $s(z) := \partial_q u(z)$ and $r(z) := \partial_p u(z)$. 
The two functions satisfy the following boundary conditions obtained after differentiating (\ref{bound-up-u-period-relation}) with respect to $p$ and $q$:
\begin{equation*}
\left\{ \begin{array}{l} 
s(T_+(p,q)) = q \partial_q T_+,  \\
s'(T_+(p,q)) = -1 - p (1-2p^2) \partial_q T_+
\end{array} \right.
\end{equation*}
and
\begin{equation*}
\left\{ \begin{array}{l}
r(T_+(p,q)) = 1 + q \partial_p T_+,  \\
r'(T_+(p,q)) = -p (1-2p^2) \partial_p T_+.\end{array} \right.
\end{equation*}
Since $s(0) = \frac{-q}{p_+(1-2p_+^2)}$, $r(0) = \frac{p(1-2p^2)}{p_+(1-2p_+^2)}$, and $s'(0) = r'(0) = 0$, 
uniqueness of solutions of the differential equations implies that 
the two functions are related by 
\begin{equation}
\label{relation-s-r}
s(z) =  - \frac{q}{p(1-2p^2)} r(z).
\end{equation}
It follows from the boundary conditions at $z = T_+(p,q)$ and relation (\ref{relation-s-r}) that 
\begin{equation}
\label{bound-up-s-left-boundary-values}
\left\{ \begin{array}{l} s(T_+(p, q)) = q \partial_q T_+, \\
s'(T_+(p, q)) = q \partial_p T_+, 
\end{array} \right.
\end{equation}
so that 
$$
\frac{s'(T_+(p, q))}{s(T_+(p,q))} = \frac{\partial_p T_+(p,q)}{\partial_q T_+(p,q)} > 0,
$$
where positivity follows from (\ref{period-property}) for sufficiently small $p$ and $q$. 

In order to prove that $s$ vanishes at exactly one point on $(0,T_+(p,q))$, we consider the invariant curve $E_\beta$ given by (\ref{energy-level}) and the period function $T_+(p,q)$ given by (\ref{period}). Every $z \in (0,T_+(p,q))$ can be represented by $z = T_+(P, Q)$ for some point $(P, Q) \in E_\beta$ with $P \in (p, p_+)$. By Lemma 3.8 in \cite{KMPX}, we have $s(z) = 0$ if and only if $\partial_Q T_+(P, Q) = 0$, which also follows from the first equation 
in system (\ref{bound-up-s-left-boundary-values}). Then, by Lemmas 3.6, 3.7, 3.9, 3.10 in \cite{KMPX}, for sufficiently small $p$ and $q$ there is exactly one point $(P, Q) \in E_\beta$  with $P\in (p, p_+)$, where $s(z) = s(T_+(P,Q)) = 0$. 
\end{proof}

\section{Existence of multi-pulse loop-localized states} 
\label{sec-proof}

Here we explain the asymptotic construction of the multi-pulse loop-localized 
states and give the proof of Theorem \ref{theorem-main1}. 

For each boundary vertex $v_j \in \V_B$ with $1 \leq j \leq |B|$, 
we use the Dirichlet data $p_j$ as the unknown variable
and write the flux boundary condition to determine $p_j$. 
The main advantage of this method is that the value of $p_j$ 
can be found independently from the conditions at the other boundary vertices. 

Fix $j$ in $1 \leq j \leq |B|$.
Assume that the boundary vertex $v_j \in \V_B$ is in contact with 
$K_j$ pendants, $L_j$ loops, and $M_j$ internal edges 
from the set $\E_{N,\epsilon}$. We denote the 
corresponding sets by $\E_{j,{\rm pend}}$, $\E_{j,{\rm loop}}$, 
and $\E_{j,{\rm int}}$ respectively. Furthermore, we divide $\E_{j,{\rm int}}$ 
into $\E^-_{j,{\rm int}}$ and $\E^+_{j,{\rm int}}$ depending 
on whether $v_j$ is the left or right vertex of the internal edge $e \in \E_{j,{\rm int}}$, respectively. The large solutions of Lemma \ref{lemma-2} 
are centered at $0$ on the pendants and loops 
but centered at an unknown point $a_e \in (-\epsilon \ell_e,\epsilon \ell_e)$ 
for $e \in \E_{j,{\rm int}}$. Assume {\em a priori} that for all large $\epsilon$, we have 
\begin{equation}
\label{apriori-assumption}
\max_{e \in \E_{j,{\rm int}}} |a_{\epsilon}| \leq a_{j,{\rm max}},
\end{equation}  
where the constant $a_{j,{\rm max}} > 0$ is $\epsilon$-independent. 

By Lemma \ref{lemma-1}, the Neumann data at $v_j$ directed away from 
$\Gamma_{\epsilon} \backslash \E_{N,\epsilon}$ is 
\begin{equation}
q_j^{(1)} = D_j p_j + \mathcal{O}(\| \vec{p} \| e^{-\epsilon \ell_{\rm min}} 
+ \| \vec{p} \|^3),
\end{equation}
where $\mathcal{O}$ denotes the error terms in the bound (\ref{eq:nlin_DTN_value}). 

By Lemma \ref{lemma-2}, the Neumann data at $v_j$ directed away from $\E_{N,\epsilon}$ is 
\begin{eqnarray}
\nonumber
q_j^{(2)} & = & (K_j + 2L_j + M_j) p_j - 4 \sum_{e \in \E_{j,{\rm pend}}} e^{-\epsilon \ell_e}  
- 8 \sum_{e \in \E_{j,{\rm loop}}} e^{-\epsilon \ell_e}  \\
&& 
- 4 \sum_{e \in \E^-_{j,{\rm int}}} e^{-\epsilon \ell_e-a_e}
- 4 \sum_{e \in \E^+_{j,{\rm int}}} e^{-\epsilon \ell_e+a_e}
+ \mathcal{O}(\epsilon e^{-3 \epsilon \ell_{\rm j, min}}),
\end{eqnarray}
where 
$$
\ell_{j,{\rm min}} := \min\{\min_{e \in \E_{j,{\rm pend}}}\{ \ell_e\},\min_{e \in \E_{j,{\rm loop}}}\{ \ell_e\},\min_{e \in \E_{j,{\rm int}}}\{ \ell_e \}\}
$$ 
and $\mathcal{O}$ denotes the error terms in the bound (\ref{eq:single-bump-DtN}).
Note that the summation in $q_j^{(2)}$ includes two contributions from the looping edges in $\E_{j,{\rm loop}}$ due to the two ends of the looping edge incident to the vertex $v_j$.

The flux boundary condition in (\ref{kbc}) gives $q_j^{(1)} + q_j^{(2)} = 0$, which becomes the implicit equation on $p_j$. 
Since $C^1$ property of the error terms is proven in \cite{BMP} 
(Remark \ref{remark-1}), the implicit equation is immediately solved 
with 
\begin{eqnarray}
\nonumber
p_j & = & \frac{4}{Z_j}  \left( 
 \sum_{e \in \E_{j,{\rm pend}}} e^{-\epsilon \ell_e}  
+ 2 \sum_{e \in \E_{j,{\rm loop}}} e^{-\epsilon \ell_e} 
+ \sum_{e \in \E^-_{j,{\rm int}}} e^{-\epsilon \ell_e-a_e} 
+ \sum_{e \in \E^+_{j,{\rm int}}} e^{-\epsilon \ell_e+a_e}
\right) \\
&& + \mathcal{O}\left( \| \vec{p} \| e^{-\epsilon \ell_{\rm min}} + 
+ \| \vec{p} \|^3 + \epsilon e^{-3 \epsilon \ell_{\rm j, min}} \right),
\label{solution-pj}
\end{eqnarray}
where $Z_j := D_j + K_j + 2 L_j + M_j$ is the total degree of the vertex $v_j$ 
in $\Gamma_{\epsilon}$. 

\begin{remark}
	\label{remark-conditions}
Although the boundary conditions (\ref{kbc}) are satisfied 
for all vertices if $p_j$ is defined by (\ref{solution-pj}) 
for $1 \leq j \leq |B|$, there are two consistency conditions 
to be verified. One condition arises from the fact that 
$\| \vec{p}\|$ in the error terms in (\ref{solution-pj})
include all Dirichlet data $\vec{p} = (p_1,p_2,\dots,p_{|B|})$.
Therefore, the error terms in (\ref{solution-pj}) must be smaller 
than the leading-order terms in (\ref{solution-pj}) as $\epsilon \to \infty$ 
for each $j$. The second condition arises from the fact that the values $a_e$ for $e \in \E_{j,{\rm int}}$ must be uniquely defined and be bounded as $\epsilon \to \infty$ according to the {\em a priori} assumption (\ref{apriori-assumption}). 
\end{remark}

In order to complete the proof of Theorem \ref{theorem-main1}, it remains to verify the two consistency conditions in Remark \ref{remark-conditions} from the two technical assumptions, 
Assumptions \ref{assumption-non-degeneracy} and \ref{assumption-internal-edge}.

The first consistency condition is satisfied if 
$$
\| \vec{p} \| e^{-\epsilon \ell_{\rm min}} \ll e^{-\epsilon \ell_{j,{\rm min}}}, 
\quad \| \vec{p} \|^3 \ll e^{-\epsilon \ell_{j,{\rm min}}}, \quad 
1 \leq j \leq |B|, 
$$
where $e^{-\epsilon \ell_{j,{\rm min}}}$ defines 
the size of the leading-order terms in (\ref{solution-pj}). This leads to the constraints (\ref{constraint-on-length-1}) and (\ref{constraint-on-length-2}) in 
Assumption \ref{assumption-non-degeneracy}.

The second consistency condition can be formulated in terms of 
the invariant curve (\ref{energy-level}). 
The same level $\beta$ for the large solution of Lemma \ref{lemma-2} 
on a given internal edge must correspond to two segments extending to two different boundary vertices.

Let $e_0 \in \E_{N,\epsilon}$ be an internal edge connecting 
$v_j \in \V_B$ at the left end and $v_k \in \V_B$ at the right end, 
so that $e_0 \in \E^-_{j,{\rm int}}$ and $e_0 \in \E^+_{k,{\rm int}}$ 
and $j \neq k$. 
Then, $a_{e_0}$ must be found from the following equation 
\begin{equation}
\label{equation-on-a}
q_{e_0 \sim v_j}^2 - p_j^2 + p_j^4 = q_{e_0 \sim v_k}^2 - p_k^2 + p_k^4,
\end{equation}
where $q_{e_0 \sim v_j}$ and $q_{e_0 \sim v_k}$ are given by the expansion 
\begin{equation}
\label{solution-qj}
\left\{ \begin{array}{l} 
q_{e_0 \sim v_j} = 4 e^{-\epsilon \ell_{e_0}-a_{e_0}} - p_j + \mathcal{O}(\epsilon e^{-3 \epsilon \ell_{e_0}}), \\
q_{e_0 \sim v_k} = -4 e^{-\epsilon \ell_{e_0} + a_{e_0}} + p_k + \mathcal{O}(\epsilon e^{-3 \epsilon \ell_{e_0}}),
\end{array}
\right.
\end{equation}
due to the bounds (\ref{eq:single-bump-DtN}).

Let us now assume that the internal edge $e_0 \in \E_{N,\epsilon}$ 
does not have common boundary vertices with other internal edges in $\E_{N,\epsilon}$ as in Assumption \ref{assumption-internal-edge}. 
This means that 
$\E_{j,{\rm int}} = \{ e_0\}$ and $\E_{k,{\rm int}} = \{e_0\}$.
Substituting (\ref{solution-pj}) and (\ref{solution-qj}) into 
the two sides of equation (\ref{equation-on-a}) gives the expansions:
\begin{eqnarray*}
&& q_{e_0 \sim v_j}^2 - p_j^2 + p_j^4 
= \frac{16}{Z_j} \left[ (Z_j-2) e^{-2\epsilon \ell_{e_0}-2a_{e_0}} 
- 2 e^{-\epsilon \ell_{e_0} - a_{e_0}} \left( 
 \sum_{e \in \E_{j,{\rm pend}}} e^{-\epsilon \ell_e}  
+ 2 \sum_{e \in \E_{j,{\rm loop}}} e^{-\epsilon \ell_e}  \right) \right] \\
&& 
+ \mathcal{O}\left( \| \vec{p} \| e^{-\epsilon \ell_{\rm min} - \epsilon \ell_0} + 
+ \| \vec{p} \|^3 e^{-\epsilon \ell_0} + \epsilon e^{-3 \epsilon \ell_{j,{\rm min}} - \epsilon \ell_{e_0}} + e^{-4 \epsilon \ell_{j,{\rm min}}} 
+ \epsilon e^{-3 \epsilon \ell_{e_0}-\epsilon \ell_{j,{\rm min}}}
	+ \epsilon  e^{-4 \epsilon \ell_{e_0}} \right) 
\end{eqnarray*}
and
\begin{eqnarray*}
	&& q_{e_0 \sim v_k}^2 - p_k^2 + p_k^4 
= \frac{16}{Z_k} \left[ (Z_k-2) e^{-2\epsilon \ell_{e_0} + 2a_{e_0}} 
	- 2 e^{-\epsilon \ell_{e_0} + a_{e_0}} \left( 
	\sum_{e \in \E_{k,{\rm pend}}} e^{-\epsilon \ell_e}  
	+ 2 \sum_{e \in \E_{k,{\rm loop}}} e^{-\epsilon \ell_e}  \right) \right] \\
	&& 
	+ \mathcal{O}\left( \| \vec{p} \| e^{-\epsilon \ell_{\rm min} - \epsilon \ell_0} + 
	+ \| \vec{p} \|^3 e^{-\epsilon \ell_0} + \epsilon e^{-3 \epsilon \ell_{k,{\rm min}} - \epsilon \ell_{e_0}} 
	+ e^{-4 \epsilon \ell_{k,{\rm min}}} 
	+ \epsilon e^{-3 \epsilon \ell_{e_0}-\epsilon \ell_{k,{\rm min}}}
	+ \epsilon  e^{-4 \epsilon \ell_{e_0}} \right).
\end{eqnarray*}
By Remark \ref{remark-degree}, we have $Z_j \geq 3$ and $Z_k \geq 3$ for the internal edge $e_0$. Equating the two sides in (\ref{equation-on-a}) 
and dividing by $e^{-2\epsilon \ell{e_0}}$
gives the unique solution of the implicit equation:
\begin{eqnarray}
\label{solution-a}
e^{4a_{e_0}} = \frac{(Z_j-2) Z_k}{(Z_k-2) Z_j} + E_0,
\end{eqnarray}
where the error terms in $E_0$ are given by 
\begin{eqnarray*}
E_0 &=& \frac{2}{Z_k-2} e^{3 a_{e_0} + \epsilon \ell_{e_0}} 
\left( 
\sum_{e \in \E_{k,{\rm pend}}} e^{-\epsilon \ell_e}  
+ 2 \sum_{e \in \E_{k,{\rm loop}}} e^{-\epsilon \ell_e}  \right) \\
\nonumber
&& -\frac{2 Z_k}{Z_j(Z_k-2)} e^{a_{e_0} + \epsilon \ell_{e_0}} 
\left( 
\sum_{e \in \E_{j,{\rm pend}}} e^{-\epsilon \ell_e}  
+ 2 \sum_{e \in \E_{j,{\rm loop}}} e^{-\epsilon \ell_e}  \right) \\
&& 
+ \mathcal{O}\left( \| \vec{p} \| e^{\epsilon \ell_{e_0} -\epsilon \ell_{\rm min}} + 
+ \| \vec{p} \|^3 e^{\epsilon \ell_{e_0}} + 
\epsilon e^{\epsilon \ell_{e_0} -3 \epsilon \ell_{j,{\rm min}}}  + 
\epsilon e^{\epsilon \ell_{e_0} -3 \epsilon \ell_{k,{\rm min}}} \right) \\
&&	
+ \mathcal{O}\left(  e^{2\epsilon \ell_{e_0}  - 4 \epsilon \ell_{j,{\rm min}}} + e^{2\epsilon \ell_{e_0}  - 4 \epsilon \ell_{k,{\rm min}}} 
+ \epsilon e^{-\epsilon \ell_{e_0}-\epsilon \ell_{j,{\rm min}}}
+ \epsilon e^{-\epsilon \ell_{e_0}-\epsilon \ell_{k,{\rm min}}}
+ \epsilon  e^{-2 \epsilon \ell_{e_0}}  \right).
\end{eqnarray*}
The error terms beyond the first (leading-order) term in (\ref{solution-a}) are small if 
\begin{eqnarray*}
	&& \| \vec{p} \| e^{-\epsilon \ell_{\rm min}} \ll e^{-\epsilon \ell_{e_0}}, \\
&& \| \vec{p} \|^3 \ll e^{-\epsilon \ell_{e_0}}, \\
&& \epsilon e^{-3 \epsilon \ell_{j,{\rm min}}} \ll e^{-\epsilon \ell_{e_0}}, \\
&& \epsilon e^{-3 \epsilon \ell_{k,{\rm min}}} \ll e^{-\epsilon \ell_{e_0}},\\
&& e^{-\epsilon \ell_{e}} \ll e^{-\epsilon \ell_{e_0}}, \quad 
\forall e \in \E_{j,{\rm pend}} \cup \E_{j, {\rm loop}}, \\
&& e^{-\epsilon \ell_{e}} \ll e^{-\epsilon \ell_{e_0}}, \quad 
\forall e \in \E_{k,{\rm pend}} \cup \E_{k, {\rm loop}}.
\end{eqnarray*}
The first two conditions coincide with the consistency conditions satisfied by 
Assumption \ref{assumption-non-degeneracy}. 
The other four conditions are satisfied if 
$\ell_{e_0}$ is strictly minimal among the half-lengths of the looping edges and the lenghts of the pendant edges as in Assumption \ref{assumption-internal-edge}. 
This means that 
$$
\ell_{e_0} = \ell_{j,{\rm min}} = \ell_{k,{\rm min}} < \ell_{e}, \quad 
\forall e \in \E_{j,{\rm pend}} \cup \E_{j, {\rm loop}} \cup \E_{k,{\rm pend}} \cup \E_{k, {\rm loop}}. 
$$
Since the first (leading-order) term in (\ref{solution-a}) is 
independent of $\epsilon$, the {\em a priori} assumption (\ref{apriori-assumption}) on the admissible values of $a_{e_0}$ is satisfied by the solution (\ref{solution-a}).

The proof of Theorem \ref{theorem-main1} is complete. The bound (\ref{eq:L2}) follows from the bound (2.24) in Theorem 2.9 in \cite{BMP}. 

\begin{remark}
	If two or more internal edges in $\E_N$ are connected to the same boundary vertex, a system of two or more equations (\ref{equation-on-a}) is set up and the solution for $a_{e_0}$ is no longer available in the simple form (\ref{solution-a}). It is generally hard to obtain the solution for $a_{e_0}$ on each internal edge in the system of nonlinear equations. 
\end{remark}

\section{Morse index for multi-pulse edge-localized states}
\label{sec-proof-2}

Here we count the Morse index for multi-pulse edge-localized states and give the proof of Theorem \ref{theorem-main2}.

Let $\Phi$ be the positive $N$-pulse edge-localized state of Theorem \ref{theorem-main1} such that the set $\E_N$ only contains pendant and 
looping edges. Let $\mathcal{L} : H^2_{\rm NK}(\Gamma) \subset L^2(\Gamma) \to L^2(\Gamma)$ be the linearized operator with the differential expression given by (\ref{Lplus}). The number of its negative eigenvalues (the Morse index)  is denoted by $n(\mathcal{L})$ and the multiplicity of its zero eigenvalue is denoted by $z(\mathcal{L})$. We will prove that $n(\mathcal{L}) = N$ and $z(\mathcal{L}) = 0$ for large enough $\epsilon$.

Applying the scaling transformation 
$\Phi(x) = \epsilon U(z)$ with $z = \epsilon x$ and $\epsilon := \sqrt{-\omega}$, we rewrite the spectral problem for $\calL$ as a collection of differential equations
\begin{equation}
\label{sp-on-edges}
-w_j''(z) + w_j(z) - 6 u_j(z)^{2} w_j(z) = \lambda w_j(z),
\quad z \in e_{j, \epsilon}, 
\end{equation}
subject to the boundary conditions (\ref{kbc}) at vertices $v \in \V$. Here $W = (w_1, w_2, \dots, w_{|\E|})$ is a rescaled eigenvector in $H^2_{\rm NL}(\Gamma_{\epsilon})$ 
and $\lambda$ is a rescaled eigenvalue of the linearized operator $\mathcal{L}$. Abusing notations, we still refer 
to the linearized operator in the rescaled variables as to $\mathcal{L}$.

\subsection{Idea of the proof}

Recall that $\V_B$ is the set of boundary vertices separating 
$\E_{N,\epsilon}$ and $\Gamma_{\epsilon} \backslash \E_{N,\epsilon}$. 
Then for every $\bm{\alpha} = (\alpha_1, \alpha_2, \cdots, \alpha_{|B|})$, 
let the space $H^2_{\bm{\alpha}}(\Gamma_\epsilon)$ be defined by the modified boundary conditions:
\begin{equation}
\label{kbc-alpha}
\left\{ \begin{array}{l}
W \text{ is continuous on } \Gamma_\epsilon, \\
\sum_{e \sim v} \partial W(v) = 0
\text{ for every vertex } 
v \in \V \backslash \V_B, \\
\sum_{e \sim v_j} \partial W(v_j) = \alpha_j W(v_j)
\text{ for every vertex } 
v_j \in \V_B,
\end{array} \right.
\end{equation}
where the derivatives $\partial$ are directed away from the vertex.

We use $\bm{\alpha} = 0$ to denote $\bm{\alpha} = (0, 0, \dots, 0)$ and $\bm{\alpha} = \infty$ to denote $\bm{\alpha} = (+\infty, + \infty, \dots, +\infty)$. Then, we have $H^2_{\rm NK}(\Gamma_{\epsilon}) \equiv H^2_{\bm{\alpha} = 0}(\Gamma_{\epsilon})$ and $H^2_{\rm D}(\Gamma_{\epsilon}) \equiv H^2_{\bm{\alpha} = \infty}(\Gamma_{\epsilon})$, where $H^2_{\rm D}(\Gamma_{\epsilon})$ 
stands for the domain with the Dirichlet conditions at the boundary vertices:
\begin{equation}
\label{kbc-infty}
\left\{ \begin{array}{l}
W \text{ is continuous on } 
\Gamma_\epsilon, \\
\sum_{e \sim v} \partial W(v) = 0
\text{ for every vertex } 
v \in \V \backslash \V_B, \\
W(v_j) = 0
\text{ for every vertex } 
v_j \in \V_B.
\end{array} \right.
\end{equation}
We also introduce the linearized operator $\Lcal_{\bm{\alpha}}$, 
such that the spectral problem $\Lcal_{\bm{\alpha}} W = \lambda W$ is still represented 
by the differential equations (\ref{sp-on-edges}), but the domain of $\Lcal_{\bm{\alpha}}$ is $H^2_{\bm{\alpha}} (\Gamma_\epsilon) \subset L^2(\Gamma_{\epsilon})$.  

The proof of Theorem \ref{theorem-main2} relies 
on the continuity argument from $\bm{\alpha} = \infty$ to $\bm{\alpha} = 0$ 
given by the following proposition.

\begin{proposition}
\label{morse-index-equality}
Assume that $\mathcal{L}_{\bm{\alpha}} W  = 0$ admits no solutions $W \in H^2_{\bm{\alpha}}(\Gamma_{\epsilon})$ for every $\bm{\alpha} \in [0, \infty)^{|B|}$. 
Then, 
$$
n(\calL_{\bm{\alpha} = 0}) = n(\calL_{\bm{\alpha} = \infty}), \qquad z(\calL_{\bm{\alpha} = 0}) = z(\calL_{\bm{\alpha} = \infty}).
$$
\end{proposition}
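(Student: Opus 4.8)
The plan is to view $\{\calL_{\bm{\alpha}}\}_{\bm{\alpha}\in[0,\infty]^{|B|}}$ as a monotone family of self-adjoint operators and to follow its non-positive eigenvalues as $\bm{\alpha}$ runs from the Neumann--Kirchhoff endpoint $\bm{\alpha}=0$ to the Dirichlet endpoint $\bm{\alpha}=\infty$. First I would pass to the quadratic form. For finite $\bm{\alpha}$ the form domain is $H^1_C(\Gamma_\epsilon)$, and multiplying (\ref{sp-on-edges}) by $\overline{W}$, integrating by parts, and using the vertex conditions (\ref{kbc-alpha}) (the interior NK vertices contribute nothing) gives
\begin{equation*}
Q_{\bm{\alpha}}(W)=\sum_{j}\int_{e_{j,\epsilon}}\left(|w_j'|^2+(1-6u_j^2)|w_j|^2\right)dz+\sum_{i=1}^{|B|}\alpha_i\,|W(v_i)|^2 .
\end{equation*}
Thus the $\bm{\alpha}$-dependence is a non-negative boundary term, and $Q_{\bm{\alpha}}\le Q_{\bm{\beta}}$ whenever $\bm{\alpha}\le\bm{\beta}$ componentwise. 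Since $\Gamma$ has finitely many edges and $U$ decays exponentially along the unbounded edges, each $\calL_{\bm{\alpha}}$ is bounded below with $\sigma_{\mathrm{ess}}\subset[1,\infty)$ in the rescaled variables; hence the spectrum in $(-\infty,0]$ is a finite set of eigenvalues of finite multiplicity and $n(\calL_{\bm{\alpha}})$, $z(\calL_{\bm{\alpha}})$ are well defined.

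Next I would apply min--max, $\lambda_k(\bm{\alpha})=\min_{\dim V=k}\ \max_{0\ne W\in V}\ Q_{\bm{\alpha}}(W)/\|W\|_{L^2(\Gamma_\epsilon)}^2$, for the eigenvalues below the essential spectrum. Monotonicity of the forms makes every branch $\bm{\alpha}\mapsto\lambda_k(\bm{\alpha})$ non-decreasing and jointly continuous on $[0,\infty)^{|B|}$ (along each coordinate it is a real-analytic rank-one form perturbation). The Morse index $n(\calL_{\bm{\alpha}})=\#\{k:\lambda_k(\bm{\alpha})<0\}$ can change only when a branch meets the value $0$, that is, only where $z(\calL_{\bm{\alpha}})>0$. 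By hypothesis $z(\calL_{\bm{\alpha}})=0$ for all $\bm{\alpha}\in[0,\infty)^{|B|}$, so no branch ever takes the value $0$ there, and by continuity (intermediate value theorem) no branch changes sign. Therefore $n(\calL_{\bm{\alpha}})\equiv n_0:=n(\calL_{\bm{\alpha}=0})$ and $z(\calL_{\bm{\alpha}})\equiv0$ on the whole finite range.

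The remaining and most delicate step is to pass to the Dirichlet endpoint $\bm{\alpha}=\infty$, which I expect to be the main obstacle: the parameter set $[0,\infty)^{|B|}$ is not compact and the family is only strongly resolvent continuous as $\alpha_i\uparrow\infty$. Letting the coordinates tend to infinity one at a time, the penalty $\alpha_i|W(v_i)|^2$ drives $W(v_i)\to0$ on states of bounded energy, and by monotone convergence of the forms the bounded, increasing branches $\lambda_k(\bm{\alpha})$ converge to the eigenvalues obtained after imposing the $i$-th Dirichlet condition; iterating identifies the limits with the eigenvalues of $\calL_{\bm{\alpha}=\infty}$. Each of the $n_0$ negative branches increases to a limit $\le0$, whereas $\lambda_{n_0+1}(\bm{\alpha})\ge\lambda_{n_0+1}(\bm{\alpha}=0)>0$ stays positive, so the eigenvalues of $\calL_{\bm{\alpha}=\infty}$ in $(-\infty,0]$ are exactly these $n_0$ limits and $n(\calL_{\bm{\alpha}=\infty})+z(\calL_{\bm{\alpha}=\infty})=n_0$. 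It then only remains to exclude the possibility that some negative branch limits to exactly $0$, which would manifest as a nontrivial kernel of $\calL_{\bm{\alpha}=\infty}$; this is precisely the no-zero-eigenvalue assumption applied at the closed endpoint (equivalently, the continuity of the branches up to $\alpha=\infty$), and it forces $z(\calL_{\bm{\alpha}=\infty})=0$. Hence $n(\calL_{\bm{\alpha}=\infty})=n_0=n(\calL_{\bm{\alpha}=0})$ and $z(\calL_{\bm{\alpha}=\infty})=0=z(\calL_{\bm{\alpha}=0})$, completing the homotopy.
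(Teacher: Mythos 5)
Your overall strategy is the same as the paper's: a homotopy in $\bm{\alpha}$ along which the eigenvalues move continuously, so that $n$ and $z$ cannot change as long as no eigenvalue touches zero. The paper dispatches this in two sentences by citing Kato's holomorphic perturbation theory; your implementation through the quadratic forms $Q_{\bm{\alpha}}$, their monotonicity in $\bm{\alpha}$, and the min--max branches is a legitimate and in fact more informative variant, since it shows the branches are monotone and not merely continuous. Everything you do on the finite parameter range $[0,\infty)^{|B|}$ is sound.

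The gap is at the Dirichlet endpoint, and you half-acknowledge it without closing it. The hypothesis excludes zero eigenvalues only for $\bm{\alpha}\in[0,\infty)^{|B|}$; it says nothing about $\bm{\alpha}=\infty$, so ``the no-zero-eigenvalue assumption applied at the closed endpoint'' is not available. Your negative branches $\lambda_k(\bm{\alpha})$, $k\le n_0$, increase as $\bm{\alpha}\to\infty$, and nothing in the hypothesis prevents one of them from converging to exactly $0$, in which case $z(\calL_{\bm{\alpha}=\infty})>0$, $n(\calL_{\bm{\alpha}=\infty})<n_0$, and the conclusion fails. This is not a phantom worry: for $-w''-w=\lambda w$ on $[0,\pi]$ with $w(\pi)=0$ and Robin condition $w'(0)=\alpha w(0)$, the zero-energy solution $\sin z$ satisfies only the $\alpha=\infty$ condition, so there is no kernel for any finite $\alpha\ge 0$, yet $n$ drops from $1$ to $0$ and $z$ jumps from $0$ to $1$ in the limit. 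The proposition therefore needs the additional input $z(\calL_{\bm{\alpha}=\infty})=0$ (equivalently, the hypothesis taken over the closed cube $[0,\infty]^{|B|}$). In the paper this input is supplied independently by Claim~1, which computes $n(\calL_{\bm{\alpha}=\infty})=N$ and $z(\calL_{\bm{\alpha}=\infty})=0$ directly from Sturm theory on the decoupled Dirichlet problems; once you import that fact, your monotonicity finishes the job cleanly by running the homotopy from $\bm{\alpha}=\infty$ down to $\bm{\alpha}=0$: negative eigenvalues only decrease, and a positive eigenvalue would have to pass through $0$ at some finite $\bm{\alpha}$ in order to become negative, which the hypothesis forbids. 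Either add that input explicitly or restate the hypothesis over the closed parameter cube.
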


\begin{proof}
The proof relies on the standard perturbation theory of linear operators given in Chapter 7 of \cite{Kato}. In short, since $\calL_{\bm{\alpha}}$ is a holomorphic family of self-adjoint operators, each isolated eigenvalue of $\calL_{\bm{\alpha}}$ depends continuously on $\alpha$. Therefore, if no eigenvalues of $\calL_{\bm{\alpha}}$ cross zero when $\bm{\alpha}$ 
traverses from $\bm{\alpha} = \infty$ to $\bm{\alpha}=0$, then the number of negative eigenvalues as well as the multiplicity of the zero eigenvalue remains the same. 
\end{proof}

By using Proposition \ref{morse-index-equality}, Theorem \ref{theorem-main2} holds if we can verify the following claims for large $\epsilon$:

$$
\begin{aligned}
& \text{\bf Claim 1:}\qquad  n(\calL_{\bm{\alpha} = \infty}) = N \text{   and   } z(\calL_{\bm{\alpha} = \infty}) = 0,\\
& \text{\bf Claim 2:}\qquad \mathcal{L}_{\bm{\alpha}} W  = 0 \text{ admits no solutions } W \in H^2_{\bm{\alpha}}(\Gamma_{\epsilon}) \text{ with } \alpha \in [0,\infty)^{|B|}.
\end{aligned}
$$
Below we give proofs of these claims. 

\subsection{Proof of Claim 1} 

The proof is based on the Sturm's Comparison theorem 
(see Section 5.5 in \cite{Teschl}). For simplicity, we assume that $\E_N$ contains only looping edges. Due to the symmetry of solutions in the looping edges, the proof extends to the pendant edges with minor modifications.

The spectral problem for the operator $\calL_{\bm{\alpha} = \infty}$ can be expressed by the second-order differential equations
(\ref{sp-on-edges}) equipped with the Dirichlet conditions at the boundary vertices given by (\ref{kbc-infty}).
The spectral problem for $\calL_{\bm{\alpha} = \infty}$ is given by $N$ uncoupled problems on the $N$ edges in $\E_{N,\epsilon}$ and another uncoupled problem in $\Gamma_{\epsilon} \backslash \E_{N,\epsilon}$. On each edge $e_{j, \epsilon} \in \E_{N,\epsilon}$, we have
\begin{equation}
\label{dirichlet-bvp-looping}
\left\{ \begin{array}{l}
-w_j''(z) + w_j(z) - 6 u_j(z)^{2} w_j(z) = \lambda w_j(z),
\quad z \in (-\epsilon \ell_j,\epsilon \ell_j), \\
w_j(- \epsilon \ell_j) = w_j( \epsilon \ell_j ) = 0,
\end{array} \right.
\end{equation}
where $u_j(z)$ satisfies the condition (\ref{bound-up}). 
Since $\epsilon$ is large, $p := u(\epsilon \ell_j)$ and $q := -u'(\epsilon \ell_j)$ are small and positive by Lemma \ref{lemma-2}. 
For $\lambda = 0$, the homogeneous equation (\ref{dirichlet-bvp-looping}) is solved by the superposition formula (\ref{superposition}) in Lemma  \ref{bound-up-s-vanishing-points}, where 
the even solution $s(z)$ has exactly one nodal point on $(0,\epsilon \ell_j)$ 
and the odd solution $u'(z)$ has no nodal points on $(0,\epsilon \ell_j)$.  By Sturm's Comparison theorem, the spectral problem (\ref{dirichlet-bvp-looping}) admits exactly one simple negative eigenvalue and no zero eigenvalue. Since eigenvalues of all $N$ spectral problems (\ref{dirichlet-bvp-looping}) also appear in the spectrum of the operator $\calL_{\bm{\alpha} = \infty}$, we have $n(\calL_{\bm{\alpha} = \infty}) \geq N$.

\begin{remark}
	If $e_{j,\epsilon} \in \E_{N,\epsilon}$ is a pendant edge, the Neumann condition at one end implies that the spectral problem (\ref{dirichlet-bvp-looping}) is solved in the space of even functions, where it still admits exactly one simple negative eigenvalue and no zero eigenvalue.
\end{remark}

Next, we show that the uncoupled problem in $\Gamma_\epsilon \backslash 
\E_{N, \epsilon}$ given by
\begin{equation}
\label{spectral-Dirichlet}
-W'' + W - 6 U^2 W = \lambda W, \quad 
z\in \Gamma_\epsilon \backslash 
\E_{N, \epsilon},
\end{equation}
subject to the boundary conditions (\ref{kbc-infty})  has a strictly positive spectrum. 
Since $U$ on  $\Gamma_\epsilon \backslash 
\E_{N, \epsilon}$ satisfies the bound (\ref{eq:nlin_DTN_solution}) 
with small $\vec{p}$ given by (\ref{solution-pj}), 
we obtain from (\ref{spectral-Dirichlet}) after multiplication by $W$ and integration over $\Gamma_\epsilon \backslash \E_{N, \epsilon}$ 
with the boundary conditions (\ref{kbc-infty}):
$$
\lambda \|W\|_{L^2(\Gamma_\epsilon \backslash 
	\E_{N, \epsilon})} = \|W\|_{L^2(\Gamma_\epsilon \backslash 
\E_{N, \epsilon})} - 6 \| U W \|_{L^2(\Gamma_{\epsilon} \backslash \E_{N,\epsilon})} + \|W'\|_{L^2(\Gamma_\epsilon \backslash 
\E_{N, \epsilon})} \geq C \| W \|^2_{L^2(\Gamma_{\epsilon} \backslash \E_{N,\epsilon})},
$$ 
for some $C > 0$. Hence, $\lambda \geq C > 0$ for every solution $W \in H^2_{\rm D}((\Gamma_{\epsilon} \backslash \E_{N,\epsilon})$, and the spectral problem (\ref{spectral-Dirichlet}) does not contribute into the nonpositive part of the spectrum of the operator $\calL_{\bm{\alpha} = \infty}$. This completes the proof of $n(\calL_{\bm{\alpha} = \infty}) = N$ and  $z(\calL_{\bm{\alpha} = \infty}) = 0$.

\subsection{Proof of Claim 2}

Let $W$ be a solution of $\calL_{\bm{\alpha}} W = 0$ in $H^2(\Gamma_\epsilon)$ with large enough $\epsilon$. 
For every looping edge $e_{j,\epsilon} \in \E_{N,\epsilon}$ associated with the segment $[-\epsilon \ell_j, \epsilon \ell_j]$, the restriction of $W$ to the edge $e_{j,\epsilon}$ denoted by $w_j$ solves the second-order differential equation 
\begin{equation}
\label{linearized-equation-on-the-edge}
-w_j''(z) + w_j(z) - 6 u_j(z)^2 w_j(z) = 0, \quad 
z\in (-\epsilon \ell_j, \epsilon \ell_j).
\end{equation}
where $u_j(z)$ satisfies the condition (\ref{bound-up}).
The following lemma computes the 
contributions of the solution of the differential equation 
(\ref{linearized-equation-on-the-edge}) to the last boundary condition 
in (\ref{kbc-alpha}).

\begin{lemma}
\label{lemma-looping-edge-alpha}
If $e_{j,\epsilon} \in \E_{N, \epsilon}$ is a looping edge with $\epsilon$ large enough, then either $w_j \equiv 0$ on the entire edge or
\begin{equation}
\label{alpha-pm}
\alpha_\pm (e_{j,\epsilon}) := \mp \frac{w_j'(\pm \epsilon \ell_j)}{w_j(\pm \epsilon \ell_j)} < 0.
\end{equation}
\end{lemma}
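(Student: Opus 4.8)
The plan is to analyze the solution $w_j$ of the linearized equation \eqref{linearized-equation-on-the-edge} on the looping edge using the superposition structure from Lemma \ref{bound-up-s-vanishing-points}. First I would observe that on the looping edge associated with $[-\epsilon \ell_j, \epsilon \ell_j]$, the solution $u_j$ has its maximum at the center $z = 0$ and is even, so by Lemma \ref{bound-up-s-vanishing-points} (applied to the right half $(0, \epsilon \ell_j)$ where $T_+(p,q) = \epsilon \ell_j$) the general solution of \eqref{linearized-equation-on-the-edge} decomposes as $w_j(z) = A u_j'(z) + B s(z)$, where $u_j'$ is the \emph{odd} part and $s$ is the \emph{even} part. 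I would use the symmetry of the edge to split $w_j$ into its even and odd components and study the boundary ratios at $\pm \epsilon \ell_j$ separately.

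The key computation is to evaluate $-w_j'(\epsilon \ell_j)/w_j(\epsilon \ell_j)$ using the decomposition. From Lemma \ref{bound-up-s-vanishing-points} we already know the two Weyl-type ratios at the endpoint: $u_j''(\epsilon \ell_j)/u_j'(\epsilon \ell_j) > 0$ and $s'(\epsilon \ell_j)/s(\epsilon \ell_j) > 0$. Since $u_j'(z) > 0$ on $(0, \epsilon \ell_j)$ has no nodal point and $s$ has exactly one nodal point there, these two fundamental solutions behave oppositely in sign structure. I would compute $w_j'(\epsilon \ell_j) = A u_j''(\epsilon \ell_j) + B s'(\epsilon \ell_j)$ and $w_j(\epsilon \ell_j) = A u_j'(\epsilon \ell_j) + B s(\epsilon \ell_j)$, then show that the logarithmic derivative $w_j'(\epsilon \ell_j)/w_j(\epsilon \ell_j)$ is a Möbius-type function of the ratio $A/B$ whose value is constrained by the two known positive ratios. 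By a Sturm-theoretic argument, any nontrivial solution $w_j$ of \eqref{linearized-equation-on-the-edge} on the edge that is \emph{not} identically zero must produce $\alpha_\pm(e_{j,\epsilon}) < 0$; the case $w_j \equiv 0$ is the exceptional alternative in the statement.

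The main obstacle I anticipate is handling the endpoints $+\epsilon \ell_j$ and $-\epsilon \ell_j$ simultaneously and showing that \emph{both} signs $\alpha_+$ and $\alpha_-$ come out negative, rather than just one. The symmetry of $u_j$ about the center means that under the reflection $z \mapsto -z$ the odd solution $u_j'$ maps to $-u_j'$ while the even solution $s$ is preserved, so the ratio at $-\epsilon \ell_j$ is controlled by the same two endpoint ratios but with a sign flip on the $A$-coefficient. I would exploit this reflection symmetry to reduce the analysis at $-\epsilon \ell_j$ to that at $+\epsilon \ell_j$, checking that the sign of $\mp w_j'(\pm \epsilon \ell_j)/w_j(\pm \epsilon \ell_j)$ is preserved under the reflection (the two minus signs in \eqref{alpha-pm} are precisely what makes this consistent). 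The delicate part is ruling out the boundary cases where $w_j(\epsilon \ell_j) = 0$ or $w_j(-\epsilon \ell_j) = 0$ for nontrivial $w_j$: here I would argue that such a vanishing, combined with the nodal-point count for $u_j'$ and $s$ on the interval, forces $w_j \equiv 0$ via uniqueness for the second-order ODE, thereby landing in the first alternative of the dichotomy.
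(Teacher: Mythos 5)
There is a genuine gap: you never use the fact that a \emph{looping} edge attaches both of its endpoints to the \emph{same} vertex, so that the continuity condition in \eqref{kbc-alpha} forces $w_j(-\epsilon \ell_j) = w_j(+\epsilon \ell_j)$. This is the single piece of topological input on which the paper's proof rests: writing $w_j = A u_j' + B s_j$ as in \eqref{superposition}, with $u_j'$ odd, $s_j$ even, and $u_j'(\pm\epsilon\ell_j) = \mp q \neq 0$, the equality of the two boundary values immediately gives $A = 0$. Then either $B=0$ (so $w_j \equiv 0$) or $w_j$ is a nonzero multiple of the even solution $s_j$, and both signs in \eqref{alpha-pm} follow at once from the second inequality in \eqref{bound-up-weyl-tit-s} together with the parity of $s_j$ and $s_j'$; moreover $s_j(\epsilon\ell_j) = q\,\partial_q T_+ \neq 0$ by Lemma \ref{lemma-3}, which disposes of your worry about $w_j(\pm\epsilon\ell_j)=0$.

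Without forcing $A=0$, your central claim --- that the two positive Weyl ratios $u_j''(\epsilon\ell_j)/u_j'(\epsilon\ell_j)>0$ and $s_j'(\epsilon\ell_j)/s_j(\epsilon\ell_j)>0$ constrain the logarithmic derivative of an \emph{arbitrary} nontrivial combination $A u_j' + B s_j$ to be negative --- is false. The map $t = A/B \mapsto w_j'(\epsilon\ell_j)/w_j(\epsilon\ell_j) = (t\, u_j'' + s_j')/(t\, u_j' + s_j)$ is a nondegenerate M\"obius transformation (the Wronskian of $u_j'$ and $s_j$ is nonzero), hence surjective onto $\RR \cup \{\infty\}$: knowing its values at $t=0$ and $t=\infty$ land in the positive half-line says nothing about intermediate $t$. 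Concretely, choosing $A,B$ so that $A u_j'(\epsilon\ell_j) + B s_j(\epsilon\ell_j)$ is a small number of either sign while $A u_j''(\epsilon\ell_j) + B s_j'(\epsilon\ell_j)$ stays bounded away from zero makes $\alpha_+$ arbitrarily large of either sign. Likewise, $w_j(\epsilon\ell_j)=0$ alone does not force $w_j\equiv 0$ by ODE uniqueness (you would also need $w_j'(\epsilon\ell_j)=0$), so your proposed resolution of the boundary case does not close either. Both difficulties disappear once the loop's continuity condition is invoked to eliminate the odd component; for a pendant edge the same role is played by the Neumann condition at the terminal vertex.
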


\begin{proof}
	By Lemma \ref{bound-up-s-vanishing-points}, the general solution is 
	given by (\ref{superposition}) rewritten again as 
	\begin{equation}
	\label{w-decomposition}
	w_j(z) = A u_j'(z) + B s_j(z)
	\end{equation}
	for some parameters $A$ and $B$. 
	Since $u_j'(z)$ is odd, $s_j(z)$ is even, and $u_j'(\pm \epsilon \ell_j) \neq 0$, the continuity condition $w_j(-\epsilon \ell_j) = w_j(\epsilon \ell_j)$ necessarily implies $A = 0$. 
	
	If $B = 0$, then $w_j \equiv 0$ on the entire edge. If $B \neq 0$, 
	then $w_j(\pm \epsilon \ell_j) \neq 0$ and $\alpha_{\pm}(e_{j,\epsilon})$ in (\ref{alpha-pm}) 
	is defined by $s_j(\pm \epsilon \ell_j)$ and $s_j'(\pm \epsilon \ell_j)$. It follows 
	from the second inequality in (\ref{bound-up-weyl-tit-s}) and the symmetry of $s_j(z)$ that $\alpha_{\pm}(e_{j,\epsilon}) < 0$.
\end{proof}

\begin{remark}
	If $e_{j,\epsilon} \in \E_{N,\epsilon}$ is a pendant edge, the Neumann condition at one end still implies that $A = 0$, after which the proof of Lemma \ref{lemma-looping-edge-alpha} holds and gives $\alpha_+(e_{j,\epsilon}) < 0$.
\end{remark}

We denote the value of $W$ at each boundary point $v_j \in \V_B$ as 
$\mathfrak{p}_j := W(v_j)$. Then on the subgraph $\Gamma_\epsilon \backslash \E_{N, \epsilon}$, $W$ solves the homogeneous equation
\begin{equation}
\label{spectral-alpha}
-W'' + W - 6 U^2 W = 0, \quad 
z\in \Gamma_\epsilon \backslash \E_{N, \epsilon},
\end{equation}
subject to the nonhomogeneous Dirichlet boundary conditions:
\begin{equation}
\label{kbc-infty-complement}
\left\{ \begin{array}{l}
W \text{ is continuous on } 
\Gamma_\epsilon  \backslash \E_{N, \epsilon}, \\
\sum_{e \sim v} \partial W(v) = 0
\text{ for every vertex } 
v \in \V \backslash \V_B, \\
W(v_j) = \mathfrak{p}_j
\text{ for every vertex } 
v_j \in \V_B.
\end{array} \right.
\end{equation}
The following lemma computes the 
contributions of the solution of the boundary-value problem 
(\ref{spectral-alpha})--(\ref{kbc-infty-complement}) 
to the last boundary condition 
in (\ref{kbc-alpha}).

\begin{lemma}
\label{lemma-complement-alpha}
There are $C_0 > 0$ and $\epsilon_0 > 0$ such that for every $\vec{\mathfrak{p}} = (\mathfrak{p}_1,\mathfrak{p}_2,\dots,\mathfrak{p}_{|B|})$ and every $\epsilon > \epsilon_0$, the unique solution $W \in H^2(\Gamma_{\epsilon} \backslash \E_{N,\epsilon})$ of the boundary-value problem (\ref{spectral-alpha})--(\ref{kbc-infty-complement}) 
satisfies 
\begin{equation}
\label{alpha-W}
\| W \|_{H^2(\Gamma_{\epsilon} \backslash \E_{N,\epsilon})} \leq C_0 \| \vec{\mathfrak{p}}\|
\end{equation}
and
\begin{equation}
\label{alpha-complement}
|\mathfrak{q}_j - D_j \mathfrak{p}_j| \leq C_0 \| \vec{\mathfrak{p}} \| e^{-\epsilon \ell_{\rm min}},
\end{equation}
where $\mathfrak{q}_j$ is the Neumann data directed away from  $\Gamma_{\epsilon} \backslash \E_{N,\epsilon}$, $D_j$ is the degree of the $j$-th boundary vertex $v_j \in \V_B$ in $\Gamma_{\epsilon} \backslash \E_{N,\epsilon}$, and $\ell_{\rm min}$ is the length of the shortest edge in $\Gamma \backslash \E_N$.
\end{lemma}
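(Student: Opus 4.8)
The plan is to regard Lemma~\ref{lemma-complement-alpha} as the linear analogue of Lemma~\ref{lemma-1}: equation~(\ref{spectral-alpha}) is linear in $W$ with the \emph{fixed} potential $6U^2$, where $U$ is the solution supplied by Lemma~\ref{lemma-1} and therefore satisfies $\|U\|_{H^2(\Gamma_\epsilon\backslash\E_{N,\epsilon})}\le C_0\|\vec p\|$ and, by the Sobolev embedding on each edge, $\|U\|_{L^\infty}\le C\|\vec p\|\to 0$ as $\epsilon\to\infty$. First I would settle existence, uniqueness, and the bound~(\ref{alpha-W}). The quadratic form of $-\tfrac{d^2}{dz^2}+1-6U^2$ under Dirichlet conditions at $\V_B$ and NK conditions at the interior vertices obeys $\int_{\Gamma_\epsilon\backslash\E_{N,\epsilon}}\big(|W'|^2+|W|^2-6U^2|W|^2\big)\ge\tfrac12\|W\|_{H^1}^2$, since $\|6U^2\|_{L^\infty}\le C\|\vec p\|^2<\tfrac12$ once $\epsilon$ is large, so the form is coercive. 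Choosing a fixed $H^2$ lift $W_\ast$ of the data with $W_\ast(v_j)=\mathfrak p_j$ and $\|W_\ast\|_{H^2}\le C\|\vec{\mathfrak p}\|$, the shifted unknown $W-W_\ast$ solves a homogeneous-Dirichlet problem with $L^2$ source, so Lax--Milgram gives a unique $W-W_\ast\in H^1_0$ and edge-by-edge elliptic regularity upgrades it to $H^2$, yielding~(\ref{alpha-W}). Linearity is exactly what makes this direct: no contraction mapping is needed and, unlike in Lemma~\ref{lemma-1}, no cubic error term is generated.

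Second, I would prove~(\ref{alpha-complement}) by comparison with the constant-coefficient problem, writing $W=W_0+V$, where $W_0$ solves $-W_0''+W_0=0$ with the same data~(\ref{kbc-infty-complement}) and $V$ solves $-V''+V=6U^2W$ with homogeneous Dirichlet data at $\V_B$. The Dirichlet-to-Neumann map attached to $W_0$ I would read off edge by edge: on an edge of rescaled length $L=\epsilon\ell_e$ with endpoint values $a,b$ ($a$ at $v_j$) the solution is $\big(a\sinh(L-z)+b\sinh z\big)/\sinh L$, whose derivative directed out of the complement at the first endpoint equals $a\coth L-b/\sinh L=a+O(e^{-L})(|a|+|b|)$. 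Summing the $D_j$ edges incident to $v_j$, and using that the interior-vertex values are themselves $O(\|\vec{\mathfrak p}\|)$ and feed into the data only through factors $1/\sinh(\epsilon\ell_e)$, the diagonal assembles to $D_j\mathfrak p_j$ while every remaining term carries a factor $O(e^{-\epsilon\ell_{\rm min}})$; thus $\mathfrak q_j^{(0)}=D_j\mathfrak p_j+O(\|\vec{\mathfrak p}\|e^{-\epsilon\ell_{\rm min}})$.

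It then remains to absorb $V$. Its source satisfies $\|6U^2W\|_{L^2}\le C\|\vec p\|^2\|\vec{\mathfrak p}\|$ and is concentrated near $\V_B$ where $U$ is largest, so coercivity gives $\|V\|_{H^2}\le C\|\vec p\|^2\|\vec{\mathfrak p}\|$, and, representing $V$ through the Dirichlet Green's function of $-\tfrac{d^2}{dz^2}+1$ on $\Gamma_\epsilon\backslash\E_{N,\epsilon}$, its Neumann contribution at $v_j$ is controlled by $C\int U^2|W|\le C\|\vec p\|^2\|\vec{\mathfrak p}\|$. Both error sources---the constant-coefficient coupling $O(\|\vec{\mathfrak p}\|e^{-\epsilon\ell_{\rm min}})$ and the potential correction $O(\|\vec p\|^2\|\vec{\mathfrak p}\|)$---are exponentially small, which delivers~(\ref{alpha-complement}) for large $\epsilon$ and is in any case all that the proof of Claim~2 requires. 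The main obstacle I anticipate is the global analysis of the constant-coefficient resolvent on the topologically nontrivial graph $\Gamma_\epsilon\backslash\E_{N,\epsilon}$ (interior NK vertices and possible cycles): after eliminating the interior-vertex values one must check that the reduced Dirichlet-to-Neumann matrix is diagonally dominant with diagonal $D_j$ and off-diagonal entries uniformly $O(e^{-\epsilon\ell_{\rm min}})$, the single power of the exponential being forced by the shortest edge of the complement.
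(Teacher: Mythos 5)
Your argument is correct and follows essentially the same route as the paper: the paper simply observes that $\| U \|_{L^{\infty}(\Gamma_{\epsilon} \backslash \E_{N,\epsilon})}$ is small and invokes the linear theory of Theorem 2.1 and Lemma 2.3 of \cite{BMP} (with $g=0$), which is precisely the constant-coefficient Dirichlet-to-Neumann computation plus small-potential perturbation that you carry out by hand. The only caveat is that your potential correction enters as $O(\|\vec{p}\|^2\|\vec{\mathfrak{p}}\|)$, which is not visibly dominated by $\|\vec{\mathfrak{p}}\| e^{-\epsilon \ell_{\rm min}}$ under the stated assumptions, but (as you note) both error terms are exponentially small, which is all that the proof of Claim 2 requires — and the paper's own citation-based proof shares this same imprecision.
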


\begin{proof}
Since $\| U \|_{L^{\infty}(\Gamma_{\epsilon} \backslash \E_{N,\epsilon})}$ is small for large enough $\epsilon$ by the bound (\ref{eq:nlin_DTN_solution}) 
with $\vec{p}$ given by (\ref{solution-pj}), the boundary-value problem (\ref{spectral-alpha})--(\ref{kbc-infty-complement}) can be analyzed with Theorem 2.1 and Lemma 2.3 in \cite{BMP} for $g = 0$. The bound (\ref{alpha-complement}) follows from the bound (2.4) in \cite{BMP} due to the smallness of $\| U \|_{L^{\infty}(\Gamma_{\epsilon} \backslash \E_{N,\epsilon})}$.
\end{proof}

\begin{remark}
	\label{remark-Neumann}
The Neumann data $\mathfrak{q}_j$ in (\ref{alpha-complement}) is directed in the opposite direction compared to $\partial \Psi(v_j)$ in (\ref{kbc-alpha}). In other words, 
\begin{equation}
\label{relation-neumann-data}
\sum_{ e \sim v_j, e \in \Gamma_\epsilon \backslash \E_{N,\epsilon}} \partial \Psi(v_j) = -\mathfrak{q}_j = -D_j \mathfrak{p}_j + \mathcal{O} (\| \vec{\mathfrak{p}} \| e^{\epsilon \ell_{\rm min}}).
\end{equation}
\end{remark}

We are now ready to show that no solutions $W \in H^2(\Gamma_{\epsilon})$ of 
$\mathcal{L}_{\bm{\alpha}} W = 0$ satisfy the last boundary condition in (\ref{kbc-alpha}) with $\alpha_j$ being nonnegative. 

First, we consider the case when $\mathfrak{p}_j = 0$ for all $v_j \in \V_B$. Then, the right hand side of the last boundary conditions in (\ref{kbc-alpha}) vanishes and $W$ satisfies the Neumann-Kirchhoff (NK) boundary conditions. Moreover, by Lemma \ref{lemma-looping-edge-alpha}, the 
$\E_{N,\epsilon}$-components of the solution $W$ are entirely zero, and make no contribution into the last condition in (\ref{kbc-alpha}). 
In what follows, the restriction of $W$ to $\Gamma_\epsilon \backslash \E_{N,\epsilon}$ solves the second-order differential equation (\ref{spectral-alpha}) subject to the Dirichlet conditions at the boundary 
vertices. By Lemma \ref{lemma-complement-alpha} due to the bound (\ref{alpha-W}), the only solution on $\Gamma_\epsilon \backslash \E_{N,\epsilon}$ is zero, 
hence $W = 0$ on $\Gamma_{\epsilon}$. 

Therefore, there exists at least one boundary vertex $v_j \in \V_B$ such that the corresponding $\mathfrak{p}_j \neq 0$. Up to multiplication by a constant, we may assume that $\mathfrak{p}_j = 1$. Using Lemmas \ref{lemma-looping-edge-alpha} and \ref{lemma-complement-alpha} and the expansion (\ref{relation-neumann-data}), the last condition in (\ref{kbc-alpha}) for large enough $\epsilon$ becomes
\begin{equation}
\label{count-alpha}
\alpha_j = \sum_{e \in \E_{j,\epsilon}} \alpha_-(e) + \sum_{e \in \E_{j,\epsilon}} \alpha_+(e) 
- \mathfrak{q}_j < 0,
\end{equation} 
where $\E_{j,\epsilon} \subset \E_{N,\epsilon}$ is the set of edges in $\E_{N,\epsilon}$ adjacent to the vertex $v_j \in \V_B$.
This contradicts to the assumption that $\alpha_j \in [0,\infty)$. Hence no 
$W \in H^2_{\bm{\alpha}}(\Gamma_{\epsilon})$ exists such that 
$\mathcal{L}_{\bm{\alpha}} W  = 0$ for $\alpha \in [0,\infty)^{|B|}$.

\section{Examples}
\label{sec-4}

We end this paper with three examples of quantum graphs, where the assumptions of Theorems \ref{theorem-main1} and \ref{theorem-main2} can be checked. We also discuss limitations of results of Theorems \ref{theorem-main1} and \ref{theorem-main2} to cover all edge-localized states on these graphs.

\begin{figure}[htbp]
	\centering
	\includegraphics[width=4in, height = 2in]{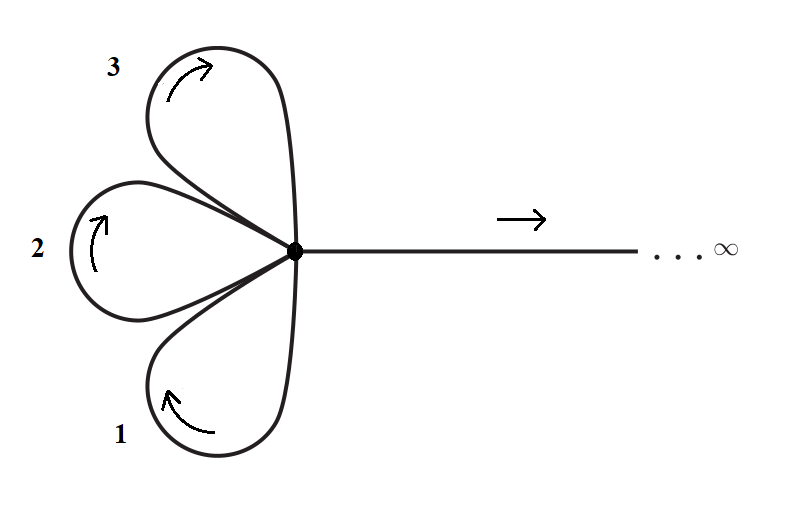}
	\caption{A flower graph with three loops as in Example \ref{ex-flower}. Each loop $j$ with $j=1,2,3$ is parametrized by a segment $[-\ell_j, \ell_j]$ of length $2 \ell_j$, and the unbounded edge is parametrized by a half-line $[0, \infty)$.
		The direction of the axis associated with each of the segments is specified by the arrows. }
	\label{fig-flower}
\end{figure}

\begin{example}
	\label{ex-flower}
{\rm	Consider a flower graph consisting of $L$ loops of the lengths $\{ 2 \ell_j \}_{j=1}^L$ and one half-line connected at a single common vertex, see Figure \ref{fig-flower} for $L = 3$. Assumption \ref{assumption-non-degeneracy} is satisfied 
	because $\ell_{\rm min} = \infty$ in (\ref{constraint-on-length-1})
	and $|B| = 1$ in (\ref{constraint-on-length-2}). Assumption \ref{assumption-internal-edge} is satisfied because no internal edges exist. 
	Theorem \ref{theorem-main1} states existence of the $N$-pulse edge-localized states for every $1 \leq N \leq L$ and Theorem \ref{theorem-main2} states that Morse index of this $N$-pulse state is $N$. This coincides with 
	the main results of \cite{KMPX} (Theorems 1, 2, and 3) in the limit $\omega \to -\infty$ obtained with long analysis of the period function in the case of loops of the same normalized length.}
\end{example}

\begin{example}
	\label{ex-dumbbell}
{\rm 	Consider a dumbbell graph consisting of two loops of lengths $2\ell_-$ and $2 \ell_+$ connected by an internal edge of length $2 \ell_0$ at two vertices, see Figure \ref{fig-dumbbell}. Assume that $\ell_- \leq \ell_+$ for convenience. We can construct several edge-localized states as follows.\\
	\begin{itemize}
		\item By Theorem \ref{theorem-main1}, assumptions of which are always satisfied if $N = 1$, the $1$-pulse state can be placed at any of the three edges in the limit $\omega \to -\infty$ in agreement with the main results of \cite{AST} and \cite{BMP}. By Theorem \ref{theorem-main2}, Morse index of the $1$-pulse state placed at the loop is $1$. Morse index for the $1$-pulse state placed at the internal edge is not defined by Theorem \ref{theorem-main2}. By the variational theory in \cite{AST}, the Morse index of this $1$-pulse state is also $1$. It was also proven explicitly in Lemma 4.10 in \cite{MP} for the symmetric dumbbell graph with $\ell_- = \ell_+$ that the Morse index of this $1$-pulse state is $1$.\\
		
		\item By Theorem \ref{theorem-main1}, the $2$-pulse state can be placed at the two loops if $\ell_+ < 2 \ell_0 + \ell_-$ and $\ell_+ < 3 \ell_-$ from the conditions (\ref{constraint-on-length-1}) and (\ref{constraint-on-length-2}) respectively. The constraints are satisfied if $\ell_- = \ell_+$. By Theorem \ref{theorem-main2}, the Morse index of this $2$-pulse state is $2$.\\
		
		\item By Theorem \ref{theorem-main1}, the $2$-pulse state can also be placed at one loop and the internal edge if $\ell_0 < \ell_-$ due to Assumption \ref{assumption-internal-edge}. The conditions (\ref{constraint-on-length-1}) and (\ref{constraint-on-length-2}) of Assumption \ref{assumption-non-degeneracy} are satisfied if $\ell_0 < \ell_- \leq \ell_+$. The Morse index of this $2$-pulse state 
		is not defined by Theorem \ref{theorem-main2}.\\
		
		\item By Theorem \ref{theorem-main1}, the $3$-pulse state can be placed at all three edges if $\ell_0 < \ell_- \leq \ell_+$. The Morse index of this $3$-pulse state is not defined by Theorem \ref{theorem-main2}.\\
	\end{itemize}
}
\end{example}

 \begin{figure}[htbp] 
	\centering
	\includegraphics[width=4in, height = 2in]{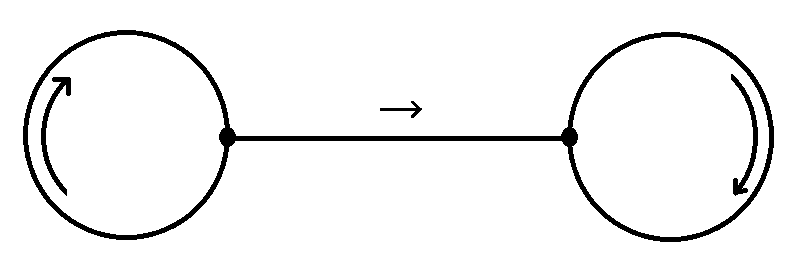}
	\caption{A dumbbell graph as in Example \ref{ex-dumbbell}. The left loop is parametrized by a segment $[-\ell_-, \ell_-]$ of length $2\ell_-$, and the right loop is parametrized by a segment $[-\ell_+, \ell_+]$ of length $2\ell_+$. The internal edge is parametrized by a segment $[-\ell_0, \ell_0]$ of length $2 \ell_0$. The direction of the axis associated with each of the segments is specified by the arrows. }
	\label{fig-dumbbell}
\end{figure}
	
\begin{example}
\label{example-counter}
{\rm Consider a graph $\Gamma$ which is given in Figure \ref{fig-line}. 
	It formally consists of three edges, where $e_1 = [0, \ell_1]$ and $e_3 = [0, \ell_3]$ are pendant edges and $e_2 = [-\ell_2, \ell_2]$ is the internal edge. However, $v_2$ and $v_3$ have are fake vertices of degree $2$, hence the same graph is mapped into a single interval $[-\ell_1-\ell_2,\ell_2+\ell_3]$ subject to Neumann boundary conditions at the end points. Assume that $\ell_1 \leq \ell_3$ for convenience.
	
	 	\begin{figure}[htbp] 
		\centering
		\includegraphics[width=2.8in, height = 0.7in]{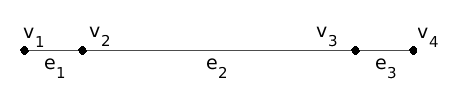}
		\caption{A segment graph $\Gamma$ in Example \ref{example-counter}.}
		\label{fig-line}
	\end{figure}

	All possible solutions of the Neumann boundary-value problem for the stationary equation $-u'' + u - 2 u^3 = 0$ are given by the integral curves shown on Fig. \ref{fig-phase-plane}. Due to the Neumann boundary conditions, the positive solution starts and ends at one of the two points $(p_-,0)$ and $(p_+,0)$ shown in red. Besides two simplest states loalized at one of the two pendant edges from a half-loop on Fig. \ref{fig-phase-plane}, we can construct three additional edge-localized states as follows.} \\

	\begin{figure}[htbp] 
	\centering
	\includegraphics[width=3.2in, height = 2.5in]{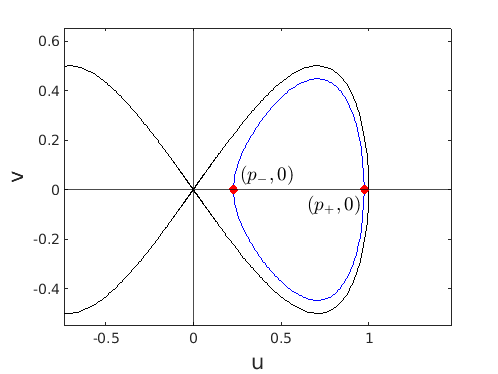}
	\caption{Phase plane $(u,v)$ of the differential equation $-u'' + u - 2 u^3 = 0$ showing the homoclinic loop for $\beta = 0$ and the integral curve $E_{\beta}$ for $\beta \in (-\frac{1}{4},0)$.}
	\label{fig-phase-plane}
\end{figure}

\begin{itemize}	
	\item 
	{\rm Let us take a single loop on the integral curve of Fig. \ref{fig-phase-plane} with the boundary conditions 
		$$
		u(-\epsilon (\ell_1+\ell_2)) = u(\epsilon(\ell_2+\ell_3)) = p_-. 
		$$
		Due to the spatial symmetry, the $1$-pulse state concentrates 
		at the middle point $z = \frac{1}{2} \epsilon (\ell_3 - \ell_1)$. 
		Since $\ell_1 \leq \ell_3$ implies $\ell_1 < 2 \ell_2 + \ell_3$, the middle point belongs to $e_2$, 
		hence the edge-localized state in the limit of large $\epsilon$ 
		(when $p_- \to 0$) corresponds to the choice $\mathcal{E}_1 : = \{e_2\}$ for the graph $\Gamma$ on Fig. \ref{fig-line}. This $1$-pulse state is illustrated on the left panel of Fig. \ref{fig-2-soliton}. The right panel shows the derivative $u'(z)$ which satisfies the homogeneous equation $-w''(z) + w(z) - 6 u(z)^2 w(z) = 0$ and Dirichlet conditions at the end points of the interval 
		$[-\epsilon (\ell_1+\ell_2),\epsilon(\ell_2+\ell_3)]$. 
		Using the Courant's nodal theorem (see Section 5.5 in \cite{Teschl}), since $u'(z)$ has two nodal domains on the interval, the zero eigenvalue is the second eigenvalue of the Dirichlet boundary-value problem (denoted as $\lambda_j^D$) interlacing with the eigenvalues of the Neumann boundary-value problem (denoted as $\lambda_j^N$) as follows:
\begin{equation}
\label{ordering-eigenvalues}
		\lambda_1^N < \lambda_1^D \leq \lambda^N_2 < \lambda^D_2 = 0 \leq \lambda_3^N < \lambda_3^D \leq \lambda_4^N < \dots
\end{equation}
Here the inequality between $\lambda^N_2$ and $\lambda^D_2$ is strict because 
the even potential $-6u(z)^2$ of the Schr\"{o}dinger operator 
$\partial_z^2 + 1 - 6 u(z)^2$ is extended periodically on $\mathbb{R}$ 
and it is a well-known fact that the anti-periodic eigenfunctions for $\lambda_1^D$ and $\lambda_2^N$ 
can not exist at the same eigenvalues as the periodic eigenfunctions 
for $\lambda^D_2$ and $\lambda_3^N$. Similarly, 
the inequality between $\lambda_3^N$ and $\lambda_3^D$ is strict. 
It follows from (\ref{ordering-eigenvalues}) that the Morse index of the $1$-pulse state is $n(\mathcal{L}) = 2$. This is a counter-example to conclusion of Theorem \ref{theorem-main2} with $N = 1$ which is the reason why the internal edges are excluded from the set $\E_N$ in Theorem \ref{theorem-main2}.}\\
	
	\begin{figure}[htbp] 
	\centering
	\includegraphics[width=3.2in, height = 2.5in]{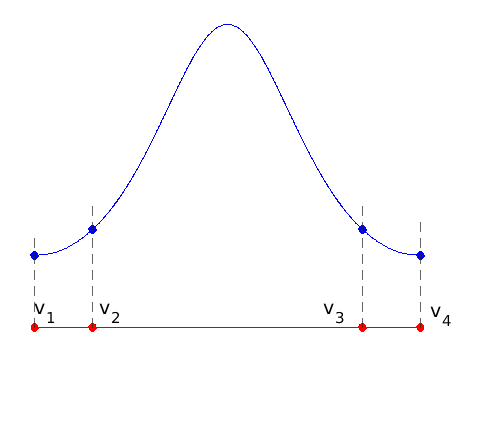}
	\includegraphics[width=3.2in, height = 2.5in]{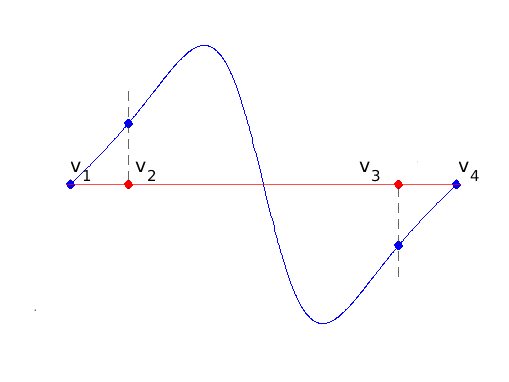}
	\caption{Left: the $1$-pulse state on the internal edge $e_2$. 
		Right: its derivative satisfying Dirichlet conditions on the interval.	}
	\label{fig-2-soliton}
\end{figure}

\item {\rm Let us take a single loop on the integral curve of Fig. \ref{fig-phase-plane} with the boundary conditions 
		$$
		u(-\epsilon (\ell_1+\ell_2)) = u(\epsilon(\ell_2+\ell_3)) = p_+. 
		$$
		In the limit of large $\epsilon$ (when $p_+ \to 1$), 
		this $2$-pulse state corresponds to the choice 
		$\mathcal{E}_2 : = \{e_1,e_3\}$ for the graph $\Gamma$ on Fig. \ref{fig-line}. Assumption \ref{assumption-non-degeneracy} is satisfied if $\ell_+ < 2 \ell_0 + \ell_-$ and $\ell_+ < 3 \ell_-$. The $2$-pulse state exists for sufficiently large $\epsilon$ by Theorem \ref{theorem-main1}. This solution is illustrated on the left panel of Fig. \ref{fig-2-pendants}. The right panel shows the derivative $u'(z)$ with two nodal domains, so that the ordering (\ref{ordering-eigenvalues}) suggests that $n(\mathcal{L}) = 2$. This agrees with the count of Theorem \ref{theorem-main2} with $N = 2$ since the internal edge $e_2$ is not in the set $\E_N$ and the fake vertices $v_2$ and $v_3$ are not obstruction to the construction of edge-localized states on the pendant edges.}\\

\item {\rm Let us take two loops on the integral curve of Fig. \ref{fig-phase-plane} with the boundary conditions 
	$$
	u(-\epsilon (\ell_1+\ell_2)) = u(\epsilon(\ell_2+\ell_3)) = p_-, 
	$$
	as shown on Fig. \ref{fig-2-pulse}.
Although $p_- \to 0$ as $\epsilon \to 0$, this $2$-pulse state does not correspond to the choice $\mathcal{E}_2 : = \{e_1,e_3\}$ in Theorem \ref{theorem-main1} because the edge-localized states at the pendants must be centered at the ends under the Neumann conditions. Consequently, 
the derivative $u'(z)$ also shown on Fig. \ref{fig-2-pulse} has four nodal domains suggesting by the Courant's nodal theorem that $n(\mathcal{L}) = 4$ which would contradict Theorem \ref{theorem-main2} with $N = 2$. However, as we explained, the main results 
are not applicable to the $2$-soliton state which is not localized at the ends of the pendants.}
	\begin{figure}[htbp] 
		\centering
		\includegraphics[width=3.2in, height = 2.5in]{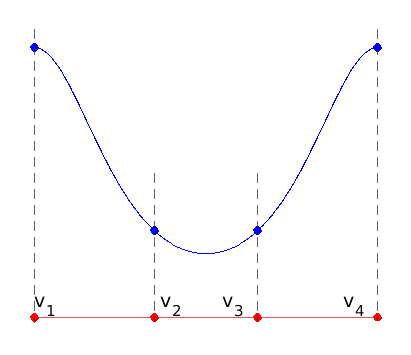}
		\includegraphics[width=3.2in, height = 2.5in]{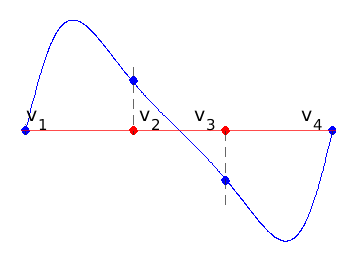}
		\caption{Left: the $2$-pulse state on the two pendants $e_1$ and $e_3$. 
			Right: its derivative satisfying Dirichlet conditions on the interval.	}
		\label{fig-2-pendants}
	\end{figure}	
	
	\begin{figure}[htbp] 
		\centering
		\includegraphics[width=3.2in, height = 2.5in]{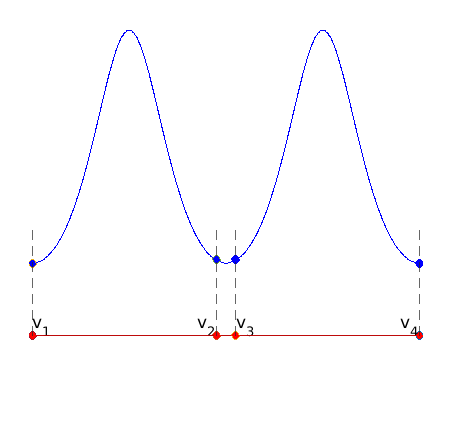}
		\includegraphics[width=3.2in, height = 2.5in]{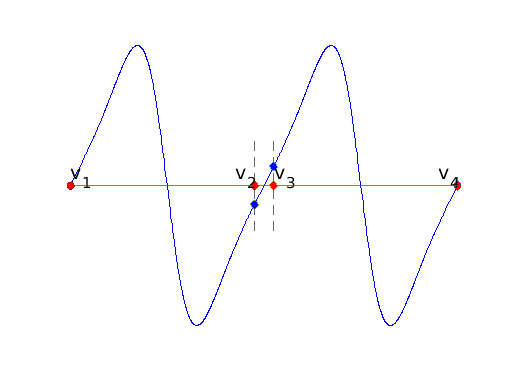}
		\caption{Left: the $2$-pulse state centered in the middle of the two pendants $e_1$ and $e_2$. Right: its derivative satisfying Dirichlet conditions on the interval.	}
		\label{fig-2-pulse}
	\end{figure}	

\end{itemize}
\end{example}

{\bf Acknowledgements:} A. Kairzhan thanks the Fields Institute for Research in Mathematical Sciences for its support and hospitality during the thematic program on Mathematical Hydrodynamics in July--December, 2020. D.E. Pelinovsky acknowledges the support from the NSERC Discovery grant.

\end{document}